\newtheorem{rem}{Remark}
\theoremstyle{plain}
\def\UIG{\mathrm{UIG}}
\def\APUD{\mathrm{APUD}}
\def\EX{\mathit{ex}}
\newtheorem{defn}{Definition}
\title{On embeddability of unit disk graphs onto straight lines \thanks{This work is supported by the {Czech Science Foundation}, project no.~{20-04567S}.}}
\titlerunning{Axes-parallel unit disk graphs}%optional, please use if title is longer than one line
\author{Onur \c{C}a\u{g}{\i}r{\i}c{\i}\orcidID{0000-0002-4785-7496}}
\authorrunning{O. \c{C}a\u{g}{\i}r{\i}c{\i}}
\institute{Masaryk University, Brno, Czech Republic\\
\email{onur@mail.muni.cz}}
\begin{document}

\maketitle
\begin{abstract}
	Unit disk graphs are the intersection graphs of unit radius disks in the Euclidean plane.
	Deciding whether there exists an embedding of a given unit disk graph, i.e. unit disk graph recognition, is an important geometric problem, and has many application areas.
	In general, this problem is known to be $\exists\mathbb{R}$-complete.
	In some applications, the objects that correspond to unit disks, have predefined (geometrical) structures to be placed on.
	Hence, many researchers attacked this problem by restricting the domain of the disk centers.
	One example to such applications is wireless sensor networks, where each disk corresponds to a wireless sensor node, and a pair of intersecting disks corresponds to a pair of sensors being able to communicate with one another.
	It is usually assumed that the nodes have identical sensing ranges, and thus a unit disk graph model is used to model problems concerning wireless sensor networks.
	We consider the unit disk graph realization problem on a restricted domain, by assuming a scenario where the wireless sensor nodes are deployed on the corridors of a building.
	Based on this scenario, we impose a geometric constraint such that the unit disks must be centered onto given straight lines.
	In this paper, we first describe a polynomial-time reduction which shows that deciding whether a graph can be realized as unit disks onto given straight lines is NP-hard, when the given lines are parallel to either $x$-axis or $y$-axis.
	Using the reduction we described, we also show that this problem is NP-complete when the given lines are only parallel to $x$-axis (and one another).
	We obtain those results using the idea of the logic engine introduced by Bhatt and Cosmadakis in 1987.
\end{abstract}

\section{Introduction} \label{sec:intro}

An \emph{intersection graph} is a graph that models the intersections among geometric objects.
In an intersection graph, each vertex corresponds to a geometric object, and each edge corresponds to a pair of intersecting geometric objects.
A \emph{unit disk graph} is the intersection graph of a set of unit disks in the Euclidean plane.
Some well-known NP-hard problems, such as chromatic number, independent set, and dominating set, remain hard on unit disk graphs \cite{unitdiskgraphs,udgOptimization,qptas}.
We are particularly interested in the unit disk recognition problem i.e. given a simple graph, deciding whether there exists an embedding of disks onto the plane which corresponds to the given graph.
This problem is known to be NP-hard \cite{udgRecognition}, and even $\exists\mathbb{R}$-complete \cite{sphereAndDotProduct} in general. 

A major application area of unit disk graphs is wireless sensor networks, since it is an accurate model (in an ideal setting) of communicating wireless sensor nodes with identical range \cite{theory,rangebased}.
In a wireless sensor network, the sensor nodes are deployed on bounded areas \cite{cagiricithesis,pathPlanning,connectivityWithBackbone}.
Thus, it becomes more interesting to observe the behavior of the unit disk graph recognition problem when the domain is restricted \cite{udgParameterized,udgIndependentSet,udgConstrained,recognizingDOG}. 

We assume that the sensor nodes are deployed onto the corridors in a building, and the floor plans are available.
We model the corridors on a floor as straight lines, and consider the recognition problem where the unit disks are centered on the given lines.
We show that this problem is NP-hard, even when the given straight lines are either vertical or horizontal, i.e. any pair of lines is either parallel, or perpendicular to each other.
In addition, we show that if there are no pairs of perpendicular lines i.e. all lines are parallel to $x$-axis, then the recognition problem is NP-complete.

Due to space restrictions, the proofs of some statements are omitted, and those statements are marked with (*).

\subsection*{Related work}
Breu and Kirkpatrick showed that the unit disk graph recognition problem is NP-hard in general\cite{udgRecognition}. 
Later on, this result was extended, and it was proved that the problem is also $\exists\mathbb{R}$-complete \cite{sphereAndDotProduct,integerRealization}.
Kuhn et al. showed that finding a ``good'' embedding is not approximable when the problem is parameterized by the maximum distance between any pair of disks' centers \cite{udgApprox}.
In the very same paper, they also give a short reduction that the realization problem and the recognition problem on unit disk graphs are polynomially equivalent \cite{udgApprox}.
	
Intuitively, the most restricted domain for unit disk graphs is when the disks are centered on a single straight line in the Euclidean plane.
In this case, the unit disks become \emph{unit intervals} on the line, and they yield a \emph{unit interval graph} \cite{unitIntervalGraphs}.
To recognize or realize whether a given graph is a unit interval graph is a linear-time task \cite{intervalRecognition}.
Our domain is restricted to not only one straight line, but to a set of straight lines given by their equations.
Given a simple graph, and a set of straight lines, we ask the question ``can this graph be realized as unit disks on the given set of straight lines?''
We show that even though these lines are restricted to be parallel to either $x$-axis or $y$-axis, it is NP-hard to determine whether the given graph can be embedded onto the given lines (Theorem~\ref{thm:main}).
We, however, do not know whether this variant belongs to the class NP, or is possibly $\exists\mathbb{R}$-complete.
If, on the other hand, the lines are restricted to be parallel only to the $x$-axis, then we show that the problem belongs to NP and is still NP-complete (Theorem~\ref{thm:completeness}).

\section{Basic terminology and notations} \label{sec:terminology}
A \emph{unit disk} around a point $p$ is the set of points in the plane whose distance from $p$ is one unit. 
Two unit disks, centered at two points $p$ and $q$, intersect when the Euclidean distance between $p$ and $q$ is less than or equal to two units.
A graph $G = (V,E)$ is called a \emph{unit disk graph} when every vertex $v \in V$ corresponds to a disk $\mathcal{D}_v$ in the Euclidean plane, and an edge $uv \in E$ exists when $\mathcal{D}_u$ and $\mathcal{D}_v$ intersect.

The \emph{unit disk recognition problem} is deciding whether a given graph $G = (V,E)$ is a unit disk graph.
That is, determining whether there exists a mapping $\Sigma: V \to (\mathbb{R} \times \mathbb{R})$, such that each vertex is the center of a unit disk without violating the intersection property. 
The mapping $\Sigma$ is also called the \emph{embedding of $G$ by unit disks}.
We use the domain of \emph{axes-parallel straight lines} which is a set of lines in 2D, where the angle between a pair of lines is either $0$ or $\pi/2$.
This implies that the equation of a straight line is either $y = a$ if it is a horizontal line, or $x = b$ if it is a vertical line, where $a,b \in \mathbb{R}$.
The input for axes-parallel straight lines recognition problem contains two sets, $\mathcal{H},\mathcal{V} \subset \mathbb{R}$, where $\mathcal{H}$ contains the Euclidean distance of each horizontal line from the $x$-axis, and $\mathcal{V}$ contains the Euclidean distance of each vertical line from the $y$-axis.
Thereby in the domain that we use, each vertex is mapped either onto a vertical line, or onto a horizontal line.
We denote the class of axes-parallel unit disk graphs as $\APUD(k,m)$ where $k$ is the number of horizontal lines, and $m$ is the number of vertical lines.
Formally, we define the problem as follows.

\begin{defn}[Axes-parallel unit disk graph recognition on $k$ horizontal and $m$ vertical lines]
The input is a graph $G = (V,E)$, where $V = \{1,2,\dots,n\}$, and two sets $\mathcal{H}, \mathcal{V} \subset \mathbb{Q}$ of rational numbers with where $|\mathcal{H}| = k$ and $|\mathcal{V}| = m$.
The task is to determine whether there exists a mapping $\Sigma: V \to  (\mathbb{R} \times \mathcal{H}) \cup (\mathcal{V} \times \mathbb{R})$ such that there is a unit disk realization of $G$ in which $u 
\in \ell_{\Sigma(u)}$ for each $u \in V$.
\end{defn}

\section{APUD(k,m) recognition is NP-hard} \label{sec:nphard}

We prove that axes-parallel unit disk recognition ($\APUD(k,m)$ recognition with $k$ and $m$ given as input) is NP-hard by giving reduction from the \emph{Monotone not-all-equal 3-satisfiability} (NAE3SAT) problem\footnote{This problem is equivalent to the 2-coloring of 3-uniform hypergraphs. We choose to give the reduction from Monotone NAE3SAT as it is more intuitive to construct for our problem}.
NAE3SAT is a variation of 3SAT where three values in each clause are not all equal to each other, and due to Schaefer's dichotomy theory, The problem remains NP-complete when all clauses are monotone (i.e. none of the literals are negated) \cite{complexitySAT}.
Our main theorem is as follows.

\begin{theorem} \label{thm:main}
	There is a polynomial-time reduction of any instance $\Phi$ of Monotone NAE3SAT to some instance $\Psi$ of $\APUD(k,m)$ such that $\Phi$ is a YES-instance if, and only if $\Psi$ is a YES-instance.
\end{theorem}

We construct our hardness proof using the scheme called a \emph{logic engine}, which is used to prove the hardness of several geometric problems \cite{logicEngine}.
For a given instance $\Phi$ of Monotone NAE3SAT, there are two main components in our reduction.
First, we construct a backbone for our gadget.
The backbone models only the number of clauses and the number of literals.
Next, we model the relationship between the clauses and literals, i.e. which literal appears in which clause.

Let us begin by describing the input graph.
For the sake of simplicity, we assume that the given formula has $3$ clauses, $A, B, C$, and $4$ literals, $q,r,s,t$ for the moment.
In general, we denote the clauses by $C_1, \dots, C_k$, and the literals by $x_1, \dots, x_m$.
Later on, we explain how to generalize the input graph according to any given instance of Monotone NAE3SAT formula.
For the following part, we describe the input graph given in Figure~\ref{fig:skeleton}.
Throughout the manuscript, we index the vertices from left to right, and from bottom to top, in ascending order.

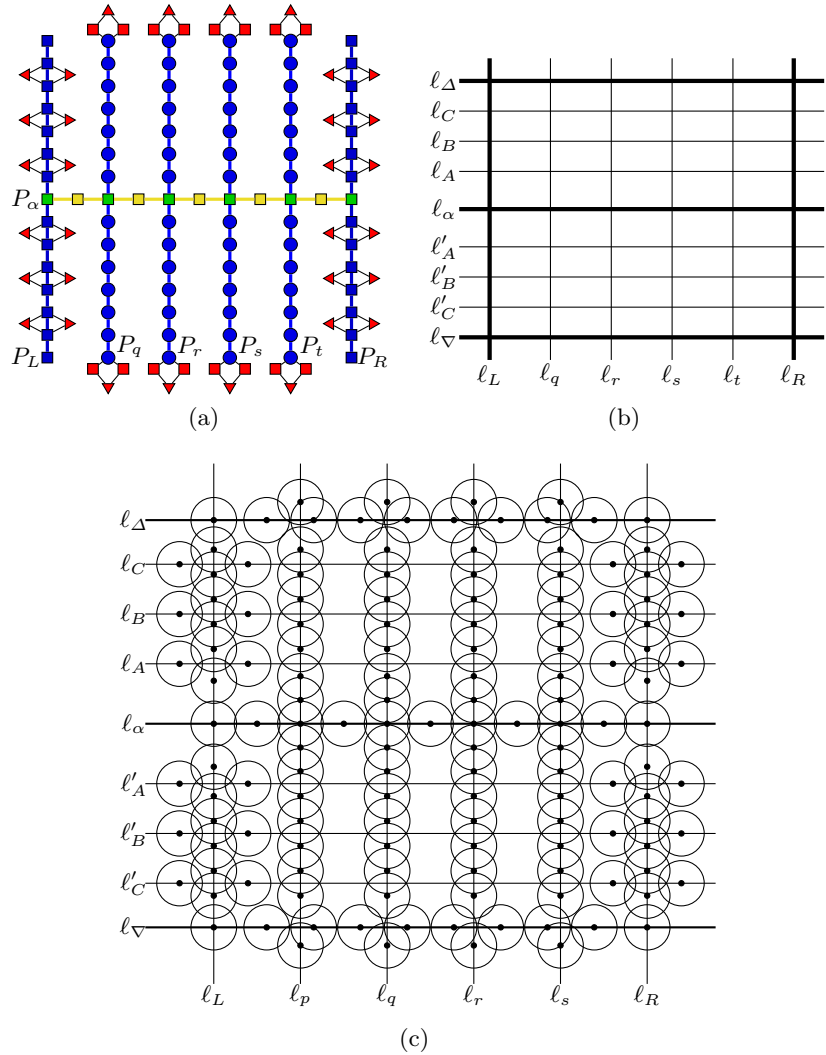
\begin{figure}[htbp]
	\centering
	
	\subfloat[]{
	\centering

	\begin{tikzpicture}[xscale=0.4, yscale=0.3,
	triangle/.style = {regular polygon, regular polygon sides=3},
	ltriangle/.style = {regular polygon, regular polygon sides=3, rotate=90},
	rtriangle/.style = {regular polygon, regular polygon sides=3, rotate=270},
	dtriangle/.style = {regular polygon, regular polygon sides=3, rotate=180}]
	
	\node[draw=none] at (-0.7,-7) {$P_L$};
	\node[draw=none] at (10.7,-7) {$P_R$};
	\node[draw=none] at (-0.7,0) {$P_\alpha$};
	\node[draw=none] at (2.7,-6.5) {$P_q$};
	\node[draw=none] at (4.7,-6.5) {$P_r$};
	\node[draw=none] at (6.7,-6.5) {$P_s$};
	\node[draw=none] at (8.7,-6.5) {$P_t$};

	\tikzstyle{every node}=[draw, shape=circle, minimum size=5pt,inner sep=0pt];

	\foreach \i in {0,2,4,6,8,10} \node[shape=rectangle, fill=green!80!black, minimum size=4pt] (\i0) at (\i,0) {};
	
	\foreach \i in {1,3,5,7,9}	\node[shape=rectangle, fill=yellow!90!black, minimum size=4pt] (\i0) at (\i,0) {};
	
	\foreach \i in {0,...,9}
	{
		\pgfmathtruncatemacro\j{\i+1};
		\draw[very thick, color=yellow!90!black] (\i0)--(\j0);
	}
		
	\foreach \i in {2,4,6,8} %vertices of literal lines
	{
		\foreach \j in {-7,-6,-5,-4,-3,-2,-1,1,2,3,4,5,6,7}
		{
			\node[fill=blue!90!black] (\i\j) at (\i,\j) {};
			
		}
	}
	
	\foreach \i in {2,4,6,8} %edges of literal lines
	{
		\foreach \j in {-7,-6,-5,-4,-3,-2,-1,0,1,2,3,4,5,6}
		{
			\pgfmathtruncatemacro\k{\j+1};
			\draw[very thick, color=blue] (\i\j)--(\i\k);
			
		}
	}

	\foreach \i in {0,10} %vertices on L and R
	{
		\foreach \j in {-7,-6,-5,-4,-3,-2,-1,1,2,3,4,5,6,7}
		{
			\node[shape=rectangle,fill=blue!80!black, minimum size=4pt] (\i\j) at (\i,\j) {};
			
		}
	}

	\foreach \i in {0,10} %edges on L and R
	{
		\foreach \j in {-7,-6,-5,-4,-3,-2,-1,0,1,2,3,4,5,6}
		{
			\pgfmathtruncatemacro\k{\j+1};
			\draw[very thick, color=blue!80!black] (\i\j)--(\i\k);
			
		}
	}

	\foreach \i in {-5.5,-3.5,-1.5,1.5,3.5,5.5}
	{
		%These two lines are required since tikz cannot parse floating point numbers as variable names
		\pgfmathtruncatemacro\j{\i-0.5}; 
		\pgfmathtruncatemacro\k{\i+0.5};
		%These two lines are required since tikz cannot parse floating point numbers as variable names

		\node[ltriangle, fill=red] (-1\j) at (-0.7,\i) {};		
		\node[rtriangle, fill=red] (1\j) at (0.7,\i) {};
		
		\node[ltriangle, fill=red] (9\j) at (9.3,\i) {};
		\node[rtriangle, fill=red] (11\j) at (10.7,\i) {};
		\draw (1\j)--(0\j);
		\draw (1\j)--(0\k);
		\draw (-1\j)--(0\j);
		\draw (-1\j)--(0\k);
		\draw (9\j)--(10\j);
		\draw (9\j)--(10\k);
		\draw (11\j)--(10\j);
		\draw (11\j)--(10\k);
	}

	\foreach \i in {2,4,6,8} %4-cycles top
	{
		\node[triangle, fill=red] (\i9) at (\i,8.3) {};
		\node[dtriangle, fill=red] (\i-9) at (\i,-8.3) {};
	}
	
	\foreach \i in {1.5,2.5,3.5,4.5,5.5,6.5,7.5,8.5} %4-cycles middle
	{
		\pgfmathtruncatemacro\j{\i+0.5}
		\node[shape=rectangle, fill=red, minimum size=4pt] (\j8) at (\i,7.5) {};
		\node[shape=rectangle, fill=red, minimum size=4pt] (\j-8) at (\i,-7.5) {};
	}
	
	\foreach \i in {2,4,6,8} %4-cycles edges
	{
		\pgfmathtruncatemacro\j{\i+1}
		\draw (\i7)--(\i8)--(\i9)--(\j8)--(\i7);
		\draw (\i-7)--(\i-8)--(\i-9)--(\j-8)--(\i-7);
	}
	
	\end{tikzpicture}
	\label{fig:skeleton}
	}
	\subfloat[]{
	\centering
	\begin{tikzpicture}[scale=0.20]
	
	\draw[ultra thick] (-2,8.5) -- (22, 8.5); %Top
	\node[draw=none] at (-3,8.5) {$\ell_\Delta$};
	\draw[ultra thick] (-2,0) -- (22,0); %Middle
	\node[draw=none] at (-3,0) {$\ell_\alpha$};
	\draw[ultra thick] (-2,-8.5) -- (22, -8.5); %Bottom
	\node[draw=none] at (-3,-8.5) {$\ell_\nabla$};

	\draw[ultra thick] (0,-10) -- (0, 10); %fixing line left
	\node[draw=none] at (0,-11) {$\ell_L$};
	\draw[ultra thick] (20,-10) -- (20, 10); %fixing line right
	\node[draw=none] at (20,-11) {$\ell_R$};
	
	\node[draw=none] at (-3,2.5) {$\ell_A$};
	\node[draw=none] at (-3,4.5) {$\ell_B$};
	\node[draw=none] at (-3,6.5) {$\ell_C$};
	\node[draw=none] at (-3,-2.5) {$\ell_A'$};
	\node[draw=none] at (-3,-4.5) {$\ell_B'$};
	\node[draw=none] at (-3,-6.5) {$\ell_C'$};
	\node[draw=none] at (4,-11) {$\ell_q$};
	\node[draw=none] at (8,-11) {$\ell_r$};
	\node[draw=none] at (12,-11) {$\ell_s$};
	\node[draw=none] at (16,-11) {$\ell_t$};
 
	\foreach \i in {1,2,3} %horizontal lines
	{
		\draw (-2,2*\i+0.5) -- (22,2*\i+0.5);
		\draw (-2,2*-\i-0.5) -- (22,2*-\i-0.5);
		
	}

	\foreach \i in {1,2,3,4} %vertical lines
	{
		\draw (4*\i,-10) -- (4*\i, 10);
		
	}
	\end{tikzpicture}
\label{fig:frame}
}
	
	\subfloat[]{
	\centering
	
	\begin{tikzpicture}[scale=0.3]

\node[draw=none] at (-3.5,0) {$\ell_\alpha$};
%\node[draw=none] at (-3.5,1.9) {$\beta$};
\node[draw=none] at (0,-12) {$\ell_L$};
\node[draw=none] at (19,-12) {$\ell_R$};

\node[draw=none] at (3.8,-12) {$\ell_p$};
\node[draw=none] at (7.6,-12) {$\ell_q$};
\node[draw=none] at (11.4,-12) {$\ell_r$};
\node[draw=none] at (15.2,-12) {$\ell_s$};

\node[draw=none] at (-3.5,-9) {$\ell_\nabla$};
\node[draw=none] at (-3.5,-7.05) {$\ell'_C$};
\node[draw=none] at (-3.5,-4.85) {$\ell'_B$};
\node[draw=none] at (-3.5,-2.65) {$\ell'_A$};
\node[draw=none] at (-3.5,2.65) {$\ell_A$};
\node[draw=none] at (-3.5,4.85) {$\ell_B$};
\node[draw=none] at (-3.5,7.05) {$\ell_C$};
\node[draw=none] at (-3.5,9) {$\ell_\Delta$};

\tikzstyle{every node}=[draw, fill=black, shape=circle, minimum size=2pt,inner sep=0pt];

\draw[thick] (-3,0)--(22,0);
%\draw (-3,1.9)--(22,1.9);

\foreach \i in {0,1.9,3.8,5.7,7.6,9.5,11.4,13.3,15.2,17.1,19} %disks on \alpha
{
	\node at (\i,0) {};
	\draw (\i,0) circle[radius=1];			
}

\foreach \i in {3.8,7.6,11.4,15.2} %disks on literal lines
{
	\foreach \j in {-7.6,-6.5,-5.4,-4.3,-3.2,-2.1,-1.05,1.05,2.1,3.3, 4.4,5.5, 6.6,7.7}
	{
		\node at (\i,\j) {};
		\draw (\i,\j) circle[radius=1];
		
	}
}

	\foreach \i in {0,19} %disks on R and L
	{
		\foreach \j in {-7.6,-6.5,-5.4,-4.3,-3.2,-1.9,1.9,3.3, 4.4,5.5, 6.6,7.7}
		{
			\node at (\i,\j) {};
			\draw (\i,\j) circle[radius=1];
			
		}
	}

\foreach \i in {0,3.8,7.6,11.4,15.2,19} %literal lines
\draw (\i,-11.5)--(\i,11.5);	

\foreach \i in {-7.05,-4.85,-2.65,2.65,4.85,7.05} %fixing flags and clause lines
{	
	\node at (1.5,\i) {};
	\node at (-1.5,\i) {};
	\node at (17.5,\i) {};
	\node at (20.5,\i) {};		
	\draw (1.5,\i) circle[radius=1];
	\draw (-1.5,\i) circle[radius=1];
	\draw (17.5,\i) circle[radius=1];
	\draw (20.5,\i) circle[radius=1];
	\draw (-3,\i)--(22,\i);
}

\draw[thick] (-3,9)--(22,9); %top squeezing line
\draw[thick] (-3,-9)--(22,-9); %bottom squeezing line

\foreach \i in {0,2.32, 4.38, 6.43, 8.48, 10.53, 12.58, 14.63, 16.68,19}
{
	\node at (\i,9) {};
	\draw (\i,9) circle[radius=1];
	\node at (\i,-9) {};
	\draw (\i,-9) circle[radius=1];
}

\foreach \i in {3.8,7.6,11.4,15.2} %top disks of 4-cycles
{
	\foreach \j in {-9.8,9.8}
	{
		\node at (\i,\j) {};
		\draw (\i,\j) circle[radius=1];
	}
}

\end{tikzpicture}
\label{fig:skeletonRealization}
}
\caption{(a) Skeleton of the input graph for $\Phi$. The consecutive induced paths, labeled as $P_q, P_r, P_s, P_t$, are to be embedded on the literal lines $\ell_q, \ell_r, \ell_s, \ell_t$ in Figure~\ref{fig:frame} respectively. The vertices in the long induced paths $P_L$ and $P_R$ in \ref{fig:skeleton} (indicated by rectangles) must be embedded on the lines $\ell_L$ and $\ell_R$ given in \ref{fig:frame}. Similarly, the vertices in $P_\alpha$ (indicated by blue and green rectangles) must be embedded on the line $\ell_\alpha$ given in \ref{fig:frame}.\\
(b) The line set of the configuration for a Monotone NAE3SAT formula $\Phi$ with $4$ literals ($q,r,s,t$) and $3$ clauses ($A,B,C$).\\
(c) Realization of the graph given in \ref{fig:skeleton} onto the lines given in \ref{fig:frame}. }
	\label{fig:frameSkeletonRealization}
\end{figure}

Three essential components of the input graph is the following induced paths
$P_\alpha = (\alpha_1, \alpha_2, \dots, \alpha_{11})$, 
$P_L = (L_1, L_2, \dots, L_{15})$, 
and $P_R = (R_1, R_2, \dots, R_{15})$.
The length of $P_\alpha$ is $2m+3$ for $m$ literals. In our case, $(2\times 4) +3 = 11$.
The lengths of $P_L$ and $P_R$ are the same, equal to $3 + 4k$ for $k$ clauses. In our case, $3 + (4\times 3) = 15$.

The middle vertices of $P_L$ and $P_R$ are the end vertices of $P_\alpha$.
That is, $\alpha_1 = L_8$, and $\alpha_{11} = R_8$.
The paths $P_L$ and $P_R$ define the left and the right boundary for our gadget, respectively.

For $i  = q,r,s,t$, there is an induced path $P_i = (i_1, \dots, i_{15})$ for each literal, with 15 vertices.
In general, we denote those paths by $P^1, P^2, \dots, P^m$ for $m$ literals.
The vertices of these paths are denoted by blue circles in the Figure~\ref{fig:skeleton}, they are mutually disjoint, but each of them shares one vertex with $P_\alpha$.
The shared vertices are precisely the middle vertices, those are indicated by green rectangle vertices in the figure.
That is, $\alpha_3 = q_8$, $\alpha_5 = r_8$, $\alpha_7 = s_8$, and $\alpha_9 = t_8$.
Moreover, $i_1$ is a vertex of an induced 4-cycle, and $i_{15}$ is a vertex of another induced 4-cycle for $i = q,r,s,t$.
The three vertices in a 4-cycle, except the one in one of the induced paths, are indicated by the red color in the figure.
Precisely two of them, that are adjacent to a blue vertex (either $i_1$ or $i_{15}$) are indicated by squares, and the remaining is indicated by a triangle.

Starting from the second edge of $P_L$ (respectively $P_R$), every second edge is a chord of a 4-cycle ($C_4$).
Throughout the paper, we refer to such 4-cycles with a chord as a \emph{diamond}.
Two vertices of these diamonds are of $P_L$ (respectively $P_R$), and remaining two are denoted by red triangles in Figure~\ref{fig:skeleton}.

Remember that the problem takes two inputs: a graph, and a set of lines determined by their equations (or rather by two sets of rational numbers, since every line is parallel to either $x$- or $y$- axis).
For a Monotone NAE3SAT formula with $3$ clauses and $4$ literals, we have described the input graph above.
Now, let us discuss the input lines of our gadget.
The input graph is given in Figure~\ref{fig:skeleton}, and the corresponding lines are given in Figure~\ref{fig:frame}.
We claim that the given graph can be embedded onto the given lines with $\varepsilon$ flexibility, and the resulting realization looks like the set of unit disks given in Figure~\ref{fig:skeletonRealization}.

In order to force such embedding, we adjust the Euclidean distance between each pair of parallel lines carefully.
We start by defining the horizontal line $\ell_\alpha$.
This line is the axis of horizontal symmetry for our line configuration.
Thus, it is safe to assume that $\ell_\alpha$ is the $x$-axis.
On the positive side of the $y$-axis, for each clause $A$, $B$, and $C$, there is a straight line parallel to $\ell_\alpha$, and another horizontal line acting as the top boundary of the configuration.
These lines are denoted by $\ell_A, \ell_B$, $\ell_C$, and $\ell_\Delta$, and their equations are $y = a$, $y = b$, $y = c$, and $y = \Delta$, respectively, where $a < b < c < \Delta$.
For every pair of consecutive horizontal lines, the Euclidean distance between them is precisely $2.01$ units.
That is, $a = 2.01$, $b = 4.02$ and $c = 6.03$, and $\Delta = 8.04$.
For every horizontal line described above, there is another horizontal line symmetric to it about the $x$-axis.
These lines are $\ell'_A, \ell'_B, \ell'_C$, and $\ell_\nabla$ (see Figure~\ref{fig:frame}).

The leftmost vertical line is $\ell_L$, which is the left boundary of our configuration.
We can safely assume that $\ell_L$ is the $y$-axis for the sake of simplicity.
For each literal $q$, $r$, $s$ and $t$, there exists a vertical line parallel to $\ell_L$, and another vertical line that defines the right boundary of our configuration.
These lines are denoted by $\ell_q, \ell_r$, $\ell_s$, $\ell_t$, and $\ell_R$, and their equations are $x = q$, $x = r$, $x = s$, $x = t$ and $x = R$, respectively, where $q < r < s < t < R$.
The Euclidean distance between each pair of consecutive vertical lines is precisely $3.8$ units.
That is $q = 3.8$, $r = 7.6$, $s = 11.4$, $t = 15.2$, and $R = 19$.

Up to this point, we have described the input graph, and the input lines for a given Monotone NAE3SAT formula with 3 clauses and 4 literals.
In general, for a given Monotone NAE3SAT formula $\Phi$ with $k$ clauses $C_1, C_2, \dots, C_k$, and $m$ literals $x_1, x_2, \dots, x_m$, our gadget has the following components.
\begin{enumerate}
\item An induced path $P_\alpha = (\alpha_1, \alpha_2, \dots, \alpha_{2m+3})$ with $2m+3$ vertices.
\item $m$ induced paths $P^1 = (P^1_1, P^1_2, \dots, P^1_{4k+3})$, $\dots$, $P^m = (P^m_1, P^m_2, \dots, P^m_{4k+3})$, each with $4k+3$ vertices, where $\alpha_3 = P^1_{2k+2}$, $\alpha_5 = P^2_{2k+2}$, $\dots$, $\alpha_{2k+1} = P^m_{2k+2}$, and induced 4-cycles containing the first and the last vertices of each of these paths.
\item Two induced paths $P_L = (L_1, \dots, L_{4k+3})$ and $P_R = (R_1, \dots, R_{4k+3})$, each with $4k+3$ vertices, where the edges $L_2L_3$, $L_4L_5$, $\dots$, $L_{2k}L_{2k+1}$, $L_{2k+3}L_{2k+4}$, $\dots$, $L_{4k+1}L_{4k+2}$, and $R_2R_3$, $R_4R_5$, $\dots$, $R_{2k}R_{2k+1}$, $R_{2k+3}R_{2k+4}$, $\dots$,\\ $R_{4k+1}R_{4k+2}$ are chords of disjoint 4-cycles.
\item $2k+3$ horizontal lines $\ell_\nabla, \ell'^C_k, \ell'^C_{k-1}, \dots, \ell_\alpha, \ell^C_1, \ell^C_2, \dots, \ell^C_k, \ell_\Delta$, with equations $\ell_\nabla: y= -2.01(k+1)$, $\ell_\Delta: -\ell_\nabla$, $\ell_\alpha: y = 0$, $\ell'^C_{i} = -2.01i$, and $\ell^C_{i} = 2.01i$ for $i = 1,2,\dots,k$.
\item $m+2$ vertical lines $\ell_L, \ell^x_1, \ell^x_2, \dots, \ell^x_m, \ell_R$, with equations $\ell_L: x = 0$, $\ell_R: x = 3.8(m+1)$, and $\ell^x_i: 3.8i$ for $i = 1,2,\dots,m$.
\end{enumerate}
In total, for the given formula $\Phi$ with $k$ clauses and $m$ literals, our gadget is an instance of $\APUD(2k+3,m+2)$.
Here, we conclude the proof of Theorem~\ref{thm:main}.

Now, let us show that the given graph has a unique embedding onto the given lines, up to $\varepsilon$ flexibility.

\begin{claim}
The vertices indicated by rectangles in Figure~\ref{fig:skeleton} can only be embedded on the bold lines in Figure~\ref{fig:frame}.
\end{claim}

Let us start by discussing the embedding of $P_L$ onto $\ell_L$ (and respectively $P_R$ onto $\ell_R$).
We give the following two trivial lemmas as preliminaries for the proof of our claim.

\begin{lemma} \label{lem:triangle}
	Consider two disks $A$ and $B$, centered on $(a,0)$ and $(b,0)$ with $0<|a|<|b|$.
	Another disk, $C$ that is centered on $(0,c)$ cannot intersect $B$ without intersecting $A$.
\end{lemma}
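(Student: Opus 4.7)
The plan is a one-line distance comparison. I would note that because $A,B,C$ are unit disks of radius $1$, the disk $C$ intersects $B$ precisely when the Euclidean distance from $(0,c)$ to $(b,0)$ is at most $2$, i.e.\ $\sqrt{b^2+c^2}\le 2$, and analogously $C$ intersects $A$ precisely when $\sqrt{a^2+c^2}\le 2$.

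The key observation is that the hypothesis $0<|a|<|b|$ immediately gives $a^2<b^2$, so
\[
a^2+c^2 \;<\; b^2+c^2 \;\le\; 4,
\]
and hence $\sqrt{a^2+c^2}<2$. Thus, the moment $C$ intersects $B$, it also intersects $A$, which is what the lemma asserts. The inequality $a^2<b^2$ is where the asymmetric position of $C$'s line (the perpendicular $y$-axis) with respect to $A$ and $B$ (both on the $x$-axis) enters: whichever of $A,B$ lies closer to the intersection point of the two perpendicular lines is the one that gets hit first.

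There is essentially no obstacle to overcome here. The only thing worth being explicit about is the convention that ``unit disks'' have radius $1$ (set in Section~\ref{sec:terminology}), so the intersection threshold on center-to-center distance is exactly $2$; everything else is a one-line monotonicity of $t\mapsto\sqrt{t+c^2}$. I would keep the write-up to two or three lines and state the conclusion that $C\cap B\neq\emptyset$ forces $C\cap A\neq\emptyset$, with no geometric case analysis needed.
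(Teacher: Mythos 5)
Your proof is correct and follows essentially the same route as the paper's: both reduce the claim to comparing the center distances $\sqrt{a^2+c^2}$ and $\sqrt{b^2+c^2}$ under the hypothesis $|a|<|b|$ and the intersection threshold of $2$. If anything, your write-up states the implication in the right direction ($C$ meets $B$ $\Rightarrow$ $C$ meets $A$), whereas the paper's own wording momentarily reverses it before reaching the same conclusion.
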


\begin{proof}

Consider the triangle whose corners are $(a,0)$, $(b,0)$, and $(0,c)$.
If $|a|<|b|$, then $\sqrt{a^2+c^2} < \sqrt{b^2+c^2}$ holds.
For $C$ to intersect $B$, $\sqrt{b^2+c^2} \leq 2$ must hold.
However, since $|a| < |b|$, if $\sqrt{b^2+c^2} \leq 2$ holds, then $\sqrt{a^2+c^2} \leq 2$ also holds.
Thus, $C$ intersects $B$ if, and only if $C$ intersects $A$. 
\end{proof}

\begin{lemma} \label{lem:s4}
	An induced 4-star ($K_{1,4}$) can be realized as a unit disk graph on two perpendicular lines, but not on two parallel lines.
\end{lemma}

\begin{proof}

First part of the proof is trivial.
Consider the $x=0$ and $y=0$ lines as two perpendicular lines.
Four unit disks centered on $(0,-(2-\varepsilon)$, $((2-\varepsilon),0)$, $(0, (2-\varepsilon)$, and $(-(2-\varepsilon),0)$ where $0 < \varepsilon \ll 1$ form an induced 4-star ($K_{1,4}$).

Now, let us show that an induced 4-star cannot be realized as unit disks on two parallel lines.
The disks that correspond to the vertices of an induced claw ($K_{1,3}$) can be embedded on two parallel lines if the centers of three disks are collinear.
Suppose that four disks, $a$, $b$, $c$, and $u$ form an induced claw, where $u$ is the central vertex, and $a,b,c$ are the rays.
Without loss of generality, suppose that $a$, $u$, and $c$ are centered on $(A,0)$, $(0,0)$ and $(C,0)$, respectively.
Thus, $b$ must be on the second parallel line, say $y=k$.
To complete a 4-star, we need one more disk, $d$, centered on either $y=0$ or $y=k$, such that $d$ intersects $u$, but none of $a$, $b$, and $c$ 

Clearly, $d$ cannot be on $y=0$ line because $u$ is enclosed by $a$ and $c$ from both sides.
So, suppose that $d$ is centered on $(D,k)$.
In this case, we show that no such $k$ exists by contradiction.

Place two more disks, $b'$ and $d'$, centered on $(-B,k)$ and $(-D,k)$, respectively.
If $b$ and $d$ do not intersect, then $b'$ and $d'$ also do not intersect.
Moreover, since $a$ does not intersect with $b$, $a$ also does not intersect with $b'$.
Symmetrically, $c$, $d$, $c'$, and $d'$ have no pairwise intersections.

The described configuration is a $K_{1,6}$ with vertices $u;a,b,c,d,b',d'$ which cannot be realized as a unit disk graph.
Therefore, we have a contradiction. 
\end{proof}

Now, with the help of Lemmas~\ref{lem:triangle} and \ref{lem:s4}, we state the following lemmas, and prove our claim.

\begin{lemma} \label{lem:LRa}
The induced paths $P_L$, $P_R$ and $P_\alpha$ in the input graph (Figure~\ref{fig:skeleton}) can only be embedded onto $\ell_L$, $\ell_R$, and $\ell_\alpha$, respectively (Figure~\ref{fig:frame}).
\end{lemma}

\begin{proof}
Assume that $P_L = (L_1, \dots, L_{15})$ is realized on a single line, say $d$.
Denote the disks $\mathcal{D}_L^1, \dots \mathcal{D}_L^{15}$ that correspond to these vertices, respectively.
Denote the disks $\mathcal{D}_\alpha^1, \dots, \mathcal{D}_\alpha^{11}$ that correspond to the vertices ($\alpha_1, \dots, \alpha_{11}$) of $P_\alpha$.

Consider the diamond $u,v,L_{11}, L_{12}$, where $u$ and $v$ are indicated by red triangles in Figure~\ref{fig:skeleton}.
Neither $U$ nor $V$ intersect any disk that corresponds to a vertex of $P_L$.
Therefore, neither of them can be embedded onto $d$.
Similarly, if they are embedded onto some other line perpendicular to $d$, then they intersect either $\mathcal{D}_L^{10}$ or $\mathcal{D}_L^{13}$ to intersect $\mathcal{D}_L^{11}$ and $\mathcal{D}_L^{12}$.
Thus, every such pair of disks must be embedded in a way such that their centers lie onto the same line that intersects $d$, and at the different sides of $d$.
This property also holds for the diamond that includes the vertices $L_{13}$ and $L_{14}$, as there is one extra vertex $L_{15}$ at the end of $P_L$.

There are 6 disjoint diamonds in which 12 vertices of $P_L$ are included.
Each of these diamonds must be embedded around the intersection of two perpendicular lines.
In addition, there is one vertex that $P_L$ and $P_\alpha$ have in common ($\alpha_1 = L_8$).
Since $P_\alpha$ is an induced path, each disk $\mathcal{D}_\alpha^i$ such that $1 \leq i \leq 11$ cannot be centered on $d$.
Thus, the center of the disk $\mathcal{D}_L^8 = \mathcal{D}_\alpha^1$, must be the closest center to some line $f$ on which $\mathcal{D}_\alpha^2$ is centered.
Otherwise, due to Lemma~\ref{lem:triangle}, $\mathcal{D}_\alpha^2$ intersects with a closer to $f$.

Therefore, to embed $P_L$, there must be a straight line, and precisely 7 parallel lines that are perpendicular to this line.
Similarly, $P_R$ also requires 7 intersection points to be realized.
Thus, both $P_L$ and $P_R$ must be realized on a vertical line.

Now recall Lemma~\ref{lem:s4}.
Note that each one of $\alpha_3, \alpha_5, \alpha_7$, and $\alpha_9$ is the central vertex of some 4-star, and thus must be embedded near the intersection of two perpendicular lines.
Considering that $P_L$ and $P_R$ are realized on two vertical lines, we need 6 intersection points to realize $P_\alpha$ as unit disks.
Since $\alpha_1 = L_8$ and $\alpha_{11} = R_8$, the remaining four intersections for 4-stars are the intersections between $\ell_alpha$ and each of $\ell_q, \ell_r, \ell_s, \ell_t$.
It is now trivial to see that our initial assumption of $P_L$ being realized on a single line always holds, since if $P_L$ is realized on more than one straight lines, then $P_\alpha$ cannot be realized due to the number of intersections.
Of course, one also needs to take into account the other induced paths, $P_q, P_r, P_s, P_t$ to verify this statement.

Therefore, $P_L$, $P_R$ and $P_\alpha$ can only be embedded onto $\ell_L$, $\ell_R$, and $\ell_\alpha$, respectively. 
\end{proof}

\begin{claim} \label{lem:3clauses4literals}
For the given input graph for 3 clauses and 4 literals, the following hold:
\begin{enumerate}[i)]
\item The induced paths $P_q = (q_1, \dots, q_{15})$, $P_r = (r_1, \dots, r_{15})$, $P_s = (s_1, \dots, s_{15})$ and $P_t = (t_1, \dots, t_{15})$ in the input graph given in Figure~\ref{fig:skeleton} can only be embedded onto $\ell_q$, $\ell_r$, 
$\ell_s$, and $\ell_t$, respectively.
\item The center of each disk that correspond to a vertex of those induced paths must be between $\ell_\Delta$ and $\ell_\nabla$. 
\item A pair of non-intersecting disks that are included in an induced 4-cycle, but not included in any of $P_q, P_r, P_s, P_t$, must lie on either $\ell_\Delta$ or $\ell_\nabla$ (red rectangles in Figure~\ref{fig:skeleton}). 
\end{enumerate}
\end{claim}

\begin{lemma} \label{lem:KclausesMliterals}
For the given input graph for $k$ clauses and $m$ literals, the following hold:
\begin{enumerate}[i)]
\item The induced paths $P_1 = (P^1_1, \dots, P^1_{4k+3})$, $P^2 = (P^2_1, \dots, P^2_{4k+3})$, $\dots$, $P^n = (P^n_1, \dots, P^n_{4k+3})$ in the input graph can only be embedded onto $\ell^x_1$, $\ell^x_2$, $\dots$ $\ell^x_m$, respectively.
\item The center of each disk that correspond to a vertex of those induced paths must be between $\ell_\Delta$ and $\ell_\nabla$. 
\item A pair of non-intersecting disks that are included in an induced 4-cycle, but not included in any of $P^1, P^2, \dots, P^n$, must lie on either $\ell_T$ or $\ell_B$ (red rectangles in Figure~\ref{fig:skeleton}). 
\end{enumerate}
\end{lemma}

\begin{proof}
Due to Lemma~\ref{lem:LRa}, we know that $P_\alpha$ is realized on $\ell_\alpha$.
Also recall that $\alpha_3 = P^1_{2k+2}$, $\alpha_5 = P^2_{2k+2}$, $\dots$, $\alpha_{2k+1} = P^m_{2k+2}$.
Thus, $P^1, P^2, \dots, P^n$ must be realized on $\ell^x_1, \ell^x_2, dots, \ell^x_m$, respectively.

Notice the induced 4-cycles in Figure~\ref{fig:skeleton}.
These 4-cycles can be embedded only on two different straight lines, since a 4-cycle is a forbidden subgraph in a unit interval graph.
The Euclidean distance between a pair of consecutive vertical lines is greater than 2 units.
Therefore, the disks that correspond to those 4-cycles must be centered on two perpendicular lines.

Consider $2m+1$ disks that correspond to $P_i^{2m+2}, \dots, P_i^{4m+3}$.
To realize an induced path of length $2m+1$ as unit disks on a straight line, the distance between the disks at two ends of the path must be greater than  units.
However, if the centers of the disks are between $\ell_\alpha$ ($x=0$) and $\ell_C^m$ ($x=3.8m$), then two disks of the 4-cycle must be on $\ell_C$, which is an infeasible configuration, because the center of the disk that corresponds to $q_15$ has $y$-coordinate greater than $6$.
Thus, topmost center among the centers of the mentioned 8 disks must above $\ell_C$, and below $\ell_T$.

In any 4-cycle, two of the disks must be centered on the same line, and the remaining two must be centered a line which is perpendicular to the line on which the first two are centered.
Moreover, the centers must lie at four different directions from the intersection point.

Therefore, a non-intersecting pair of disks from each induced 4-cycle must be centered on $\ell_T$ and $\ell_B$.
The vertices to which those pairs of disks correspond are indicated by red rectangles in Figure~\ref{fig:skeleton}. 
\end{proof}

With the Lemmas~\ref{lem:LRa} and \ref{lem:KclausesMliterals}, we have shown that the vertices denoted by rectangles in Figure~\ref{fig:skeleton} must be embedded onto the bold lines in Figure~\ref{fig:frame}.

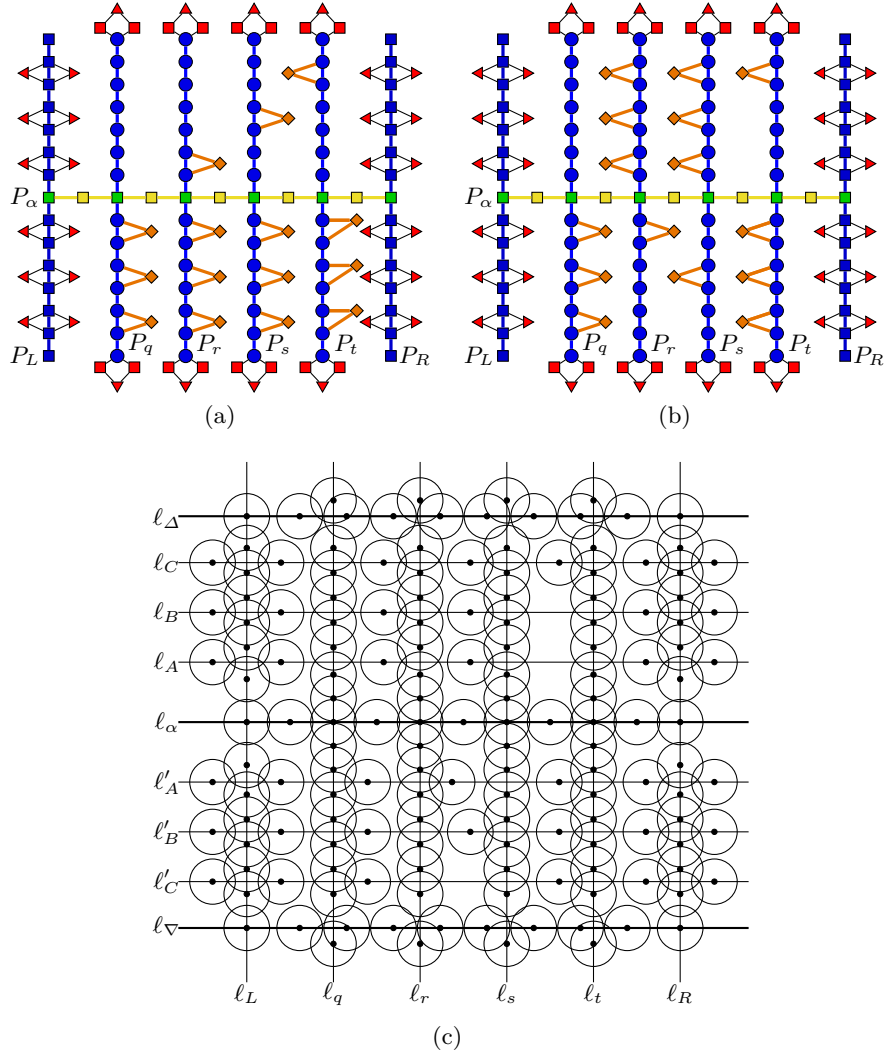
\begin{figure}[htbp]
	\centering
	
	\subfloat[]{
	\centering

	\begin{tikzpicture}[xscale=0.45, yscale=0.3,
	triangle/.style = {regular polygon, regular polygon sides=3},
	ltriangle/.style = {regular polygon, regular polygon sides=3, rotate=90},
	rtriangle/.style = {regular polygon, regular polygon sides=3, rotate=270},
	dtriangle/.style = {regular polygon, regular polygon sides=3, rotate=180}]
	
	\node[draw=none] at (-0.7,-7) {$P_L$};
	\node[draw=none] at (10.7,-7) {$P_R$};
	\node[draw=none] at (-0.7,0) {$P_\alpha$};
	\node[draw=none] at (2.7,-6.5) {$P_q$};
	\node[draw=none] at (4.7,-6.5) {$P_r$};
	\node[draw=none] at (6.7,-6.5) {$P_s$};
	\node[draw=none] at (8.7,-6.5) {$P_t$};

	\tikzstyle{every node}=[draw, shape=circle, minimum size=5pt,inner sep=0pt];

	\foreach \i in {0,2,4,6,8,10} \node[shape=rectangle, fill=green!80!black, minimum size=4pt] (\i0) at (\i,0) {};
	
	\foreach \i in {1,3,5,7,9}	\node[shape=rectangle, fill=yellow!90!black, minimum size=4pt] (\i0) at (\i,0) {};
	
	\foreach \i in {0,...,9}
	{
		\pgfmathtruncatemacro\j{\i+1};
		\draw[very thick, color=yellow!90!black] (\i0)--(\j0);
	}
		
	\foreach \i in {2,4,6,8} %vertices of literal lines
	{
		\foreach \j in {-7,-6,-5,-4,-3,-2,-1,1,2,3,4,5,6,7}
		{
			\node[fill=blue!90!black] (\i\j) at (\i,\j) {};
			
		}
	}
	
	\foreach \i in {2,4,6,8} %edges of literal lines
	{
		\foreach \j in {-7,-6,-5,-4,-3,-2,-1,0,1,2,3,4,5,6}
		{
			\pgfmathtruncatemacro\k{\j+1};
			\draw[very thick, color=blue] (\i\j)--(\i\k);
			
		}
	}

	\foreach \i in {0,10} %vertices on L and R
	{
		\foreach \j in {-7,-6,-5,-4,-3,-2,-1,1,2,3,4,5,6,7}
		{
			\node[shape=rectangle,fill=blue!80!black, minimum size=4pt] (\i\j) at (\i,\j) {};
			
		}
	}

	\foreach \i in {0,10} %edges on L and R
	{
		\foreach \j in {-7,-6,-5,-4,-3,-2,-1,0,1,2,3,4,5,6}
		{
			\pgfmathtruncatemacro\k{\j+1};
			\draw[very thick, color=blue!80!black] (\i\j)--(\i\k);
			
		}
	}

	\foreach \i in {-5.5,-3.5,-1.5,1.5,3.5,5.5}
	{
		%These two lines are required since tikz cannot parse floating point numbers as variable names
		\pgfmathtruncatemacro\j{\i-0.5}; 
		\pgfmathtruncatemacro\k{\i+0.5};
		%These two lines are required since tikz cannot parse floating point numbers as variable names

		\node[ltriangle, fill=red] (-1\j) at (-0.7,\i) {};		
		\node[rtriangle, fill=red] (1\j) at (0.7,\i) {};
		
		\node[ltriangle, fill=red] (9\j) at (9.3,\i) {};
		\node[rtriangle, fill=red] (11\j) at (10.7,\i) {};
		\draw (1\j)--(0\j);
		\draw (1\j)--(0\k);
		\draw (-1\j)--(0\j);
		\draw (-1\j)--(0\k);
		\draw (9\j)--(10\j);
		\draw (9\j)--(10\k);
		\draw (11\j)--(10\j);
		\draw (11\j)--(10\k);
	}

	\foreach \i in {2,4,6,8} %4-cycles top
	{
		\node[triangle, fill=red] (\i9) at (\i,8.3) {};
		\node[dtriangle, fill=red] (\i-9) at (\i,-8.3) {};
	}
	
	\foreach \i in {1.5,2.5,3.5,4.5,5.5,6.5,7.5,8.5} %4-cycles middle
	{
		\pgfmathtruncatemacro\j{\i+0.5}
		\node[shape=rectangle, fill=red, minimum size=4pt] (\j8) at (\i,7.5) {};
		\node[shape=rectangle, fill=red, minimum size=4pt] (\j-8) at (\i,-7.5) {};
	}
	
	\foreach \i in {2,4,6,8} %4-cycles edges
	{
		\pgfmathtruncatemacro\j{\i+1}
		\draw (\i7)--(\i8)--(\i9)--(\j8)--(\i7);
		\draw (\i-7)--(\i-8)--(\i-9)--(\j-8)--(\i-7);
	}
	
		\foreach \i in {-5.5,-3.5,-1.5}
	{
		%These two lines are required since tikz cannot parse floating point numbers as variable names
		\pgfmathtruncatemacro\j{\i-0.5}; 
		\pgfmathtruncatemacro\k{\i+0.5};
		\pgfmathtruncatemacro\l{\k-1}; 
		%These two lines are required since tikz cannot parse floating point numbers as variable names
		\foreach \x in {3,5,7}
		{
			\pgfmathtruncatemacro\y{\x-1}; 
			\pgfmathtruncatemacro\z{\j+1};
			\node[shape=diamond,fill=orange!90!black] (\x\j) at (\x,\i) {};
			\draw[very thick, color=orange!90!black] (\x\j)--(\y\j);
			\draw[very thick, color=orange!90!black] (\x\j)--(\y\z);	
		}
		\node[shape=diamond,fill=orange!90!black] (9\k) at (9,\k) {};
		\draw[very thick, color=orange!90!black] (9\k)--(8\k);
		\draw[very thick, color=orange!90!black] (9\k)--(8\k);
		\draw[very thick, color=orange!90!black] (9\k)--(8\l);
		
	}

	\node[shape=diamond, fill=orange!90!black] (52) at (5, 1.5) {};
	\draw[very thick, color=orange!90!black] (52)--(41);
	\draw[very thick, color=orange!90!black] (52)--(42);
	
	\node[shape=diamond,fill=orange!90!black] (74) at (7, 3.5) {};
	\draw[very thick, color=orange!90!black] (74)--(63);
	\draw[very thick, color=orange!90!black] (74)--(64);
	
	\node[shape=diamond,fill=orange!90!black] (75) at (7,5.5) {};
	\draw[very thick, color=orange!90!black] (75)--(85);
	\draw[very thick, color=orange!90!black] (75)--(86);
	
	\end{tikzpicture}
	\label{fig:withFlags}
	}
	\subfloat[]{
	\centering
	\begin{tikzpicture}[xscale=0.45, yscale=0.3,
	triangle/.style = {regular polygon, regular polygon sides=3},
	ltriangle/.style = {regular polygon, regular polygon sides=3, rotate=90},
	rtriangle/.style = {regular polygon, regular polygon sides=3, rotate=270},
	dtriangle/.style = {regular polygon, regular polygon sides=3, rotate=180}]
	
	\node[draw=none] at (-0.7,-7) {$P_L$};
	\node[draw=none] at (10.7,-7) {$P_R$};
	\node[draw=none] at (-0.7,0) {$P_\alpha$};
	\node[draw=none] at (2.7,-6.5) {$P_q$};
	\node[draw=none] at (4.7,-6.5) {$P_r$};
	\node[draw=none] at (6.7,-6.5) {$P_s$};
	\node[draw=none] at (8.7,-6.5) {$P_t$};

	\tikzstyle{every node}=[draw, shape=circle, minimum size=5pt,inner sep=0pt];

	\foreach \i in {0,2,4,6,8,10} \node[shape=rectangle, fill=green!80!black, minimum size=4pt] (\i0) at (\i,0) {};
	
	\foreach \i in {1,3,5,7,9}	\node[shape=rectangle, fill=yellow!90!black, minimum size=4pt] (\i0) at (\i,0) {};
	
	\foreach \i in {0,...,9}
	{
		\pgfmathtruncatemacro\j{\i+1};
		\draw[very thick, color=yellow!90!black] (\i0)--(\j0);
	}
		
	\foreach \i in {2,4,6,8} %vertices of literal lines
	{
		\foreach \j in {-7,-6,-5,-4,-3,-2,-1,1,2,3,4,5,6,7}
		{
			\node[fill=blue!90!black] (\i\j) at (\i,\j) {};
			
		}
	}
	
	\foreach \i in {2,4,6,8} %edges of literal lines
	{
		\foreach \j in {-7,-6,-5,-4,-3,-2,-1,0,1,2,3,4,5,6}
		{
			\pgfmathtruncatemacro\k{\j+1};
			\draw[very thick, color=blue] (\i\j)--(\i\k);
			
		}
	}

	\foreach \i in {0,10} %vertices on L and R
	{
		\foreach \j in {-7,-6,-5,-4,-3,-2,-1,1,2,3,4,5,6,7}
		{
			\node[shape=rectangle,fill=blue!80!black, minimum size=4pt] (\i\j) at (\i,\j) {};
			
		}
	}

	\foreach \i in {0,10} %edges on L and R
	{
		\foreach \j in {-7,-6,-5,-4,-3,-2,-1,0,1,2,3,4,5,6}
		{
			\pgfmathtruncatemacro\k{\j+1};
			\draw[very thick, color=blue!80!black] (\i\j)--(\i\k);
			
		}
	}

	\foreach \i in {-5.5,-3.5,-1.5,1.5,3.5,5.5}
	{
		%These two lines are required since tikz cannot parse floating point numbers as variable names
		\pgfmathtruncatemacro\j{\i-0.5}; 
		\pgfmathtruncatemacro\k{\i+0.5};
		%These two lines are required since tikz cannot parse floating point numbers as variable names

		\node[ltriangle, fill=red] (-1\j) at (-0.7,\i) {};		
		\node[rtriangle, fill=red] (1\j) at (0.7,\i) {};
		
		\node[ltriangle, fill=red] (9\j) at (9.3,\i) {};
		\node[rtriangle, fill=red] (11\j) at (10.7,\i) {};
		\draw (1\j)--(0\j);
		\draw (1\j)--(0\k);
		\draw (-1\j)--(0\j);
		\draw (-1\j)--(0\k);
		\draw (9\j)--(10\j);
		\draw (9\j)--(10\k);
		\draw (11\j)--(10\j);
		\draw (11\j)--(10\k);
	}

	\foreach \i in {2,4,6,8} %4-cycles top
	{
		\node[triangle, fill=red] (\i9) at (\i,8.3) {};
		\node[dtriangle, fill=red] (\i-9) at (\i,-8.3) {};
	}
	
	\foreach \i in {1.5,2.5,3.5,4.5,5.5,6.5,7.5,8.5} %4-cycles middle
	{
		\pgfmathtruncatemacro\j{\i+0.5}
		\node[shape=rectangle, fill=red, minimum size=4pt] (\j8) at (\i,7.5) {};
		\node[shape=rectangle, fill=red, minimum size=4pt] (\j-8) at (\i,-7.5) {};
	}
	
	\foreach \i in {2,4,6,8} %4-cycles edges
	{
		\pgfmathtruncatemacro\j{\i+1}
		\draw (\i7)--(\i8)--(\i9)--(\j8)--(\i7);
		\draw (\i-7)--(\i-8)--(\i-9)--(\j-8)--(\i-7);
	}
	
	\foreach \i in {-5.5,-3.5,-1.5}
	{
		%These two lines are required since tikz cannot parse floating point numbers as variable names
		\pgfmathtruncatemacro\j{\i-0.5}; 
		\pgfmathtruncatemacro\k{\i+0.5};
		\pgfmathtruncatemacro\l{\k-1}; 
		%These two lines are required since tikz cannot parse floating point numbers as variable names
		\foreach \x in {3}
		{
			\pgfmathtruncatemacro\y{\x-1}; 
			\pgfmathtruncatemacro\z{\j+1};
			\node[shape=diamond,fill=orange!90!black] (\x\j) at (\x,\i) {};
			\draw[very thick, color=orange!90!black] (\x\j)--(\y\j);
			\draw[very thick, color=orange!90!black] (\x\j)--(\y\z);	
		}
		\node[shape=diamond,fill=orange!90!black] (9\k) at (7,\i) {};
		\draw[very thick, color=orange!90!black] (9\k)--(8\k);
		\draw[very thick, color=orange!90!black] (9\k)--(8\k);
		\draw[very thick, color=orange!90!black] (9\k)--(8\l);
		
	}
	
	\foreach \i in {5.5,3.5,1.5}
	{
		%These two lines are required since tikz cannot parse floating point numbers as variable names
		\pgfmathtruncatemacro\j{\i-0.5}; 
		\pgfmathtruncatemacro\k{\i+0.5};
		\pgfmathtruncatemacro\l{\k-1}; 
		%These two lines are required since tikz cannot parse floating point numbers as variable names
		\foreach \x in {3,5}
		{
			\pgfmathtruncatemacro\y{\x+1}; 
			\pgfmathtruncatemacro\z{\j+1};
			\node[shape=diamond,fill=orange!90!black] (\x\j) at (\x,\i) {};
			\draw[very thick, color=orange!90!black] (\x\j)--(\y\j);
			\draw[very thick, color=orange!90!black] (\x\j)--(\y\z);	
		}
%		\node[shape=diamond,fill=orange!90!black] (9\k) at (7,\i) {};
%		\draw[very thick, color=orange!90!black] (9\k)--(8\k);
%		\draw[very thick, color=orange!90!black] (9\k)--(8\k);
%		\draw[very thick, color=orange!90!black] (9\k)--(8\l);
		
	}

	\node[shape=diamond, fill=orange!90!black] (5-2) at (5, -1.5) {};
	\draw[very thick, color=orange!90!black] (5-2)--(4-1);
	\draw[very thick, color=orange!90!black] (5-2)--(4-2);
	
	\node[shape=diamond,fill=orange!90!black] (7-4) at (5, -3.5) {};
	\draw[very thick, color=orange!90!black] (7-4)--(6-3);
	\draw[very thick, color=orange!90!black] (7-4)--(6-4);
	
	\node[shape=diamond,fill=orange!90!black] (75) at (7,5.5) {};
	\draw[very thick, color=orange!90!black] (75)--(85);
	\draw[very thick, color=orange!90!black] (75)--(86);
	
	\end{tikzpicture}
\label{fig:truthAssignment}
}
	
	\subfloat[]{
	\centering
	
	\begin{tikzpicture}[scale=0.3]

\node[draw=none] at (-3.5,0) {$\ell_\alpha$};
%\node[draw=none] at (-3.5,1.9) {$\beta$};
\node[draw=none] at (0,-12) {$\ell_L$};
\node[draw=none] at (19,-12) {$\ell_R$};

\node[draw=none] at (3.8,-12) {$\ell_q$};
\node[draw=none] at (7.6,-12) {$\ell_r$};
\node[draw=none] at (11.4,-12) {$\ell_s$};
\node[draw=none] at (15.2,-12) {$\ell_t$};

\node[draw=none] at (-3.5,-9) {$\ell_\nabla$};
\node[draw=none] at (-3.5,-7.05) {$\ell'_C$};
\node[draw=none] at (-3.5,-4.85) {$\ell'_B$};
\node[draw=none] at (-3.5,-2.65) {$\ell'_A$};
\node[draw=none] at (-3.5,2.65) {$\ell_A$};
\node[draw=none] at (-3.5,4.85) {$\ell_B$};
\node[draw=none] at (-3.5,7.05) {$\ell_C$};
\node[draw=none] at (-3.5,9) {$\ell_\Delta$};

\tikzstyle{every node}=[draw, fill=black, shape=circle, minimum size=2pt,inner sep=0pt];

\draw[thick] (-3,0)--(22,0);
%\draw (-3,1.9)--(22,1.9);

\foreach \i in {0,1.9,3.8,5.7,7.6,9.5,11.4,13.3,15.2,17.1,19} %disks on \alpha
{
	\node at (\i,0) {};
	\draw (\i,0) circle[radius=1];			
}

\foreach \i in {3.8,7.6,11.4,15.2} %disks on literal lines
{
	\foreach \j in {-7.6,-6.5,-5.4,-4.3,-3.2,-2.1,-1.05,1.05,2.1,3.3, 4.4,5.5, 6.6,7.7}
	{
		\node at (\i,\j) {};
		\draw (\i,\j) circle[radius=1];
		
	}
}

	\foreach \i in {0,19} %disks on R and L
	{
		\foreach \j in {-7.6,-6.5,-5.4,-4.3,-3.2,-1.9,1.9,3.3, 4.4,5.5, 6.6,7.7}
		{
			\node at (\i,\j) {};
			\draw (\i,\j) circle[radius=1];
			
		}
	}

\foreach \i in {0,3.8,7.6,11.4,15.2,19} %literal lines
\draw (\i,-11.5)--(\i,11.5);	

\foreach \i in {-7.05,-4.85,-2.65,2.65,4.85,7.05} %fixing flags and clause lines
{	
	\node at (1.5,\i) {};
	\node at (-1.5,\i) {};
	\node at (17.5,\i) {};
	\node at (20.5,\i) {};		
	\draw (1.5,\i) circle[radius=1];
	\draw (-1.5,\i) circle[radius=1];
	\draw (17.5,\i) circle[radius=1];
	\draw (20.5,\i) circle[radius=1];
	\draw (-3,\i)--(22,\i);
}

\draw[thick] (-3,9.1)--(22,9.1); %top squeezing line
\draw[thick] (-3,-9.1)--(22,-9.1); %bottom squeezing line

\foreach \i in {0,2.32, 4.38, 6.43, 8.48, 10.53, 12.58, 14.63, 16.68,19}
{
	\node at (\i,9.1) {};
	\draw (\i,9.1) circle[radius=1];
	\node at (\i,-9.1) {};
	\draw (\i,-9.1) circle[radius=1];
}

\foreach \i in {3.8,7.6,11.4,15.2} %top disks of 4-cycles
{
	\foreach \j in {-9.8,9.8}
	{
		\node at (\i,\j) {};
		\draw (\i,\j) circle[radius=1];
	}
}

\foreach \i in {2.65,4.85,7.05} %flag vertices on top
{
	\node at (6,\i) {};
	\draw (6,\i) circle[radius=1];
	\node at (9.8,\i) {};
	\draw (9.8,\i) circle[radius=1];
}

\foreach \i in {-2.65,-4.85,-7.05} %flag vertices on bottom
{
	\node at (5.3,\i) {};
	\draw (5.3,\i) circle[radius=1];
	\node at (13.7,\i) {};
	\draw (13.7,\i) circle[radius=1];
}

\node at (13.7,7.05) {};
\draw (13.7,7.05) circle[radius=1];

\node at (9.8,-4.85) {};
\draw (9.8,-4.85) circle[radius=1];

\node at (9,-2.65) {};
\draw (9,-2.65) circle[radius=1];
\end{tikzpicture}
\label{fig:truthAssignmentRealization}
}
\caption{(a) The input graph for the Monotone NAE3SAT formula with $\Phi$ with literals $q,r,s,t$, and clauses $A,B,C$.
$\Phi = A \wedge B \wedge C$  where $A = (q \vee s \vee t)$, $B = (q \vee r \vee t)$, and $C =(q \vee r \vee s)$. The flag vertices, indicated by orange diamonds, are adjacent to the vertices that correspond to a clause on an induced path, if the literal does not appear in that clause.\\
(b) A truth assignment that satisfies the formula given in \ref{fig:withFlags}: $q = \textsc{true}$, $r = \textsc{false}$, $s = \textsc{false}$, and $t = \textsc{true}$.\\
(c) Realization of the graph given in \ref{fig:truthAssignment}.}
	\label{fig:fullGadget}
\end{figure}

Using the backbone we have described, we now show how to model the relationship between the clauses and the literals.
To make it easier to follow, we also refer to Figure~\ref{fig:skeleton} in parentheses in the following description.
\begin{itemize}
\item Consider a sub-path $(P^i_{2k+3}, P^i_{2k+4}$, $\dots$, $P^i_{4k+3})$ of the induced path $P^i$. This part corresponds to the literal $x_i$ of the given Monotone NAE3SAT formula (corresponds to $(q_9, \dots, q_{15})$ of $P_q$ in our example).

\item The edges $P^i_{2k+3}P^i_{2k+4}$, $P^i_{2k+5}P^i_{2k+6}$, $\dots$, $P^i_{4k+1}P^i_{4k+2}$ (correspond to $q_9q_{10}$, $q_{11}q_{12}$, and $q_{13}q_{14}$ in $P_q$ in our example) are used to model membership of $x_i$ in the clauses $C_1$, $C_2$, $\dots$, $C_k$ (correspond to the clauses $A$, $B$, and $C$ in our example), respectively.

\item If $x_i$ appears in a clause $C_j$, then we do nothing for the edges correspond do those clauses.

\item Otherwise, if $x_i$ does not appear in $C_j$, then we introduce a \emph{flag vertex} in the graph, which is adjacent to $P^i_{2(k+j)+1}$ and $P^i_{2(k+j)+2}$.

\item Due to the rigidity of the backbone (up to $\varepsilon$ flexibility), this flag vertex lies on $\ell^C_j$.
Similarly, in our example, if $q$ appears in $B$, then $q_{11}q_{12}$ stays as is, but otherwise, a flag vertex is introduced, adjacent to both $q_{11}$ and $q_{12}$.

\item Every clause has 3 literals. Thus, on each horizontal line, 3 out of $m$ possible flag vertices will be missing.
That sums up to a total of $k(m-3)$ flag vertices for this part of the graph.

\item For the remaining sub-path $(P^i_1,\dots,P^i_{2k+2})$ of $P^i$ (corresponds to $(q_1,\dots,q_8)$ of $P_q$ in our example), we introduce the flag vertices for the pairs $(P^i_2,P^i_3)$, $(P^i_4,P^i_5)$, $\dots$, $P^i_{2k}$, $P^i_{2k+1}$ (correspond to $(q_6,q_7)$ $(q_2,q_3)$, $(q_4,q_5)$, $(q_6,q_7)$ in our example).

\item That is a total number of $k*m$ flag vertices for this part of the graph. In the whole graph, there are precisely $2km-3k$ flag vertices.
\end{itemize}

Realize that the embeddings on some vertical lines must be flipped upside-down to create space for the flag vertices.
This operation corresponds to the truth assignment of the literal that corresponds to that vertical line.
The configuration forces at least one literal to have a different truth assignment, because for a pair of symmetrical horizontal lines, say $\ell_A$ and $\ell'_A$, there must be at least one, and at most two missing flags for the disks to fit between $\ell_L$ and $\ell_R$.

The input graph, a YES-instance, and the realization of the YES-instance of the Monotone NAE3SAT formula $\Phi = (q \vee s \vee t) \wedge (q \vee r \vee t) \wedge (q \vee r \vee s)$ is given in Figure~\ref{fig:fullGadget}.

\begin{corollary} \label{cor:largestCycle}
	Given a graph $G = (V,E)$, deciding whether $G$ is a unit disk graph is an NP-hard problem when the size of the largest induced cycle in $G$ is of length $4$.
\end{corollary}

\begin{proof}
	In the NP-hardness proof given in Section~\ref{sec:nphard}, the largest cycles in the input graph are diamonds, and induced 4-cycles.
	The rest of the graph consists of long paths.
	Since axes-parallel unit disk graph recognition is a more restricted version of unit disk graph recognition problem, the claim holds. 
\end{proof}

\section{APUD(k,0) recognition is NP-complete} \label{sec:npcomplete}
In this section, we show that the recognition of axes-parallel unit disk graphs is NP-complete, when all the given lines are parallel to each other.
This version of the problem is referred to as $\APUD(k,0)$, as there are $k$ horizontal lines given as input, but no vertical lines.
We use the reduction given in Section~\ref{sec:nphard}.

\begin{theorem} \label{thm:apud0m}
$\APUD(k,0)$ recognition is NP-hard.
\end{theorem}

\begin{proof}
Consider the realization given in Figure~\ref{fig:skeletonRealization}.
Notice that the length of the paths $P^1, P^2, \dots, P^m$ ($P_q, P_r, P_s, P_t$ in our example), and thus the number of disks on vertical lines is equal.
Lemma~\ref{lem:3clauses4literals} (ii) implies that those disks must be centered between $\ell_\nabla$ and $\ell_\Delta$. Therefore, for the disks that correspond to the vertices on these paths, we do not need any vertical line. We can simply remove the vertical lines, and add an extra horizontal line for each clause.
For the disks that are adjacent to, but not on $P_L$ and $P_R$, we can simply add another horizontal line.
That is an extra horizontal line for each clause.
As a result, for a given instance $\Phi$ of Monotone NAE3SAT formula with $k$ clauses and $m$ literals; we have an instance $\Psi$ of $\APUD(2k+3,m+2)$ to prove NP-hardness with vertical lines, and an instance $\Psi'$ of $\APUD(4k+3,0)$ to prove NP-completeness without vertical lines.
For each clause, we have 3 horizontal lines.

In $\Psi'$, only disks that can ``jump'' from one horizontal line to another are the ones that are on the top line of $\ell^C_1$ and  bottom line of $\ell'^C_1$.
And those jumps do not change the overall configuration. 
\end{proof}

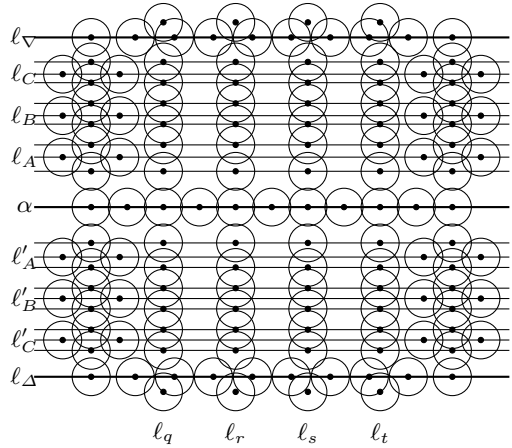
\begin{figure}[htbp]
\centering
\begin{tikzpicture}[scale=0.25]

\node[draw=none] at (-3.5,0) {$\alpha$};
%\node[draw=none] at (-3.5,1.9) {$\beta$};
%\node[draw=none] at (0,-12) {$L$};
%\node[draw=none] at (19,-12) {$R$};

\node[draw=none] at (3.8,-12) {$\ell_q$};
\node[draw=none] at (7.6,-12) {$\ell_r$};
\node[draw=none] at (11.4,-12) {$\ell_s$};
\node[draw=none] at (15.2,-12) {$\ell_t$};

\node[draw=none] at (-3.5,-9) {$\ell_\Delta$};
\node[draw=none] at (-3.5,-7.05) {$\ell'_C$};
\node[draw=none] at (-3.5,-4.85) {$\ell'_B$};
\node[draw=none] at (-3.5,-2.65) {$\ell'_A$};
\node[draw=none] at (-3.5,2.65) {$\ell_A$};
\node[draw=none] at (-3.5,4.85) {$\ell_B$};
\node[draw=none] at (-3.5,7.05) {$\ell_C$};
\node[draw=none] at (-3.5,9) {$\ell_\nabla$};

\tikzstyle{every node}=[draw, fill=black, shape=circle, minimum size=2pt,inner sep=0pt];

\draw[thick] (-3,0)--(22,0);
%\draw (-3,1.9)--(22,1.9);

\foreach \i in {0,1.9,3.8,5.7,7.6,9.5,11.4,13.3,15.2,17.1,19} %disks on \alpha
{
	\node at (\i,0) {};
	\draw (\i,0) circle[radius=1];			
}

\foreach \i in {3.8,7.6,11.4,15.2} %disks on literal lines
{
	\foreach \j in {-7.6,-6.5,-5.4,-4.3,-3.2,-1.9,1.9,3.3, 4.4,5.5, 6.6,7.7}
	{
		\node at (\i,\j) {};
		\draw (\i,\j) circle[radius=1];		
	}
}

\foreach \i in {-7.6,-6.5,-5.4,-4.3,-3.2,-1.9,1.9,3.3, 4.4,5.5, 6.6,7.7} %
	\draw (-3,\i)--(22,\i);

	\foreach \i in {0,19} %disks on R and L
	{
		\foreach \j in {-7.6,-6.5,-5.4,-4.3,-3.2,-1.9,1.9,3.3, 4.4,5.5, 6.6,7.7}
		{
			\node at (\i,\j) {};
			\draw (\i,\j) circle[radius=1];
			
		}
	}

%\foreach \i in {0,3.8,7.6,11.4,15.2,19} %literal lines
%\draw (\i,-11.5)--(\i,11.5);	

\foreach \i in {-7.05,-4.85,-2.65,2.65,4.85,7.05} %fixing flags and clause lines
{	
	\node at (1.5,\i) {};
	\node at (-1.5,\i) {};
	\node at (17.5,\i) {};
	\node at (20.5,\i) {};		
	\draw (1.5,\i) circle[radius=1];
	\draw (-1.5,\i) circle[radius=1];
	\draw (17.5,\i) circle[radius=1];
	\draw (20.5,\i) circle[radius=1];
	\draw (-3,\i)--(22,\i);
}

\draw[thick] (-3,9)--(22,9); %top squeezing line
\draw[thick] (-3,-9)--(22,-9); %bottom squeezing line

\foreach \i in {0, 2.32, 4.38, 6.43, 8.48, 10.53, 12.58, 14.63, 16.68, 19}
{
	\node at (\i,9) {};
	\draw (\i,9) circle[radius=1];
	\node at (\i,-9) {};
	\draw (\i,-9) circle[radius=1];
}

\foreach \i in {3.8,7.6,11.4,15.2} %top disks of 4-cycles
{
	\foreach \j in {-9.8,9.8}
	{
		\node at (\i,\j) {};
		\draw (\i,\j) circle[radius=1];
	}
}

\end{tikzpicture}
\caption{Realization of the graph given in Figure~\ref{fig:skeleton} on a set of parallel lines.}
\label{fig:allHorizontal}
\end{figure}

To show that $\APUD(k,0)$ recognition is in NP, we need to prove that a given solution can be verified in polynomial time as well as any feasible input will have a solution that takes up polynomial space, with respect to the input size.
Thus, we show that for any graph $G \in \APUD(0,k)$, there exists an embedding where the disk centers are represented using polynomially many decimals with respect to the input size.

For the following lemmas, define the set $\mathcal{H} = \{0, h_1, h_2, \dots, h_k\}$ where each element of the set corresponds to the Euclidean distance between a horizontal line, and the $x$-axis.
Without loss of generality, we assume that the bottom-most line is $x$-axis itself, and $h_i < h_j$ iff $i<j$.

\begin{defn}[Extension disk]
Let $A$ and $B$ a pair of intersecting unit disks centered at $(a,h_i)$ and $(b,h_j)$, respectively.
The \emph{extension disk} $A_\EX$ of $A$ is a disk centered at $(a,h_i)$, with radius $2$.
Then, $(b,h_j)$ is contained inside $A_\EX$, and symmetrically, $(a,h_i)$ is contained inside the extension disk $B_\EX$ of $B$.
\end{defn}

We give the trivial definition of extension disk for the sake of simplicity in the following lemmas.
Essentially, we claim that the center of a disk has some freedom of movement, and this freedom is determined by precisely two disks.
To show this dependence, we utilize the intervals on a line, defined by the intersection points between the extension disks with that line.

\begin{lemma}
Consider a disk $A$ centered at $(a,h_i)$. 
Each pair of extension disks, $B_\EX$ and $C_\EX$ that intersect $y = h_i$ line determines an interval $I_A$, in which the center of $A$ can move without changing the relationship among $A$, $B$, and $C$.
\end{lemma}

\begin{proof}
Let $\beta$ and $\gamma$ be two positive numbers, $B$ and $C$ be a pair of disks, and $B_\EX$ and $C_\EX$ be the extension disks, respectively.
Without loss of generality, let $(a-\beta, h_i)$ be the intersection point between $B_\EX$ and $y=h_i$ which is the closest intersection point to the center of $A$.
Similarly, let $(a+\gamma, h_i)$ be the intersection point between $C_\EX$ and $y=h_i$ which is the closest intersection point to the center of $A$.

If $A$ intersects both $B$ and $C$, then $I_A = [a-\beta, a+\gamma]$.
If $A$ intersects $B$ but not $C$, then $I_A = [a-\beta, a+\gamma)$.
If $A$ intersects $C$ but not $B$, then $I_A = (a-\beta, a+\gamma]$.
If $A$ intersects neither $B$ nor $C$, then $I_A = (a-\beta, a+\gamma)$. 
\end{proof}

\begin{lemma} \label{lem:interval}
Consider a disk $A$ centered at $(a,h_i)$, and assume that the Euclidean distance between any pair of parallel lines is different than $2$.
Let $I_A$ be an interval on $y = h_i$ line, in which the center of $A$ can move freely without changing the relationship between $A$ and any other disk.
Then, $I_A$ is determined by at most two extension disks that intersect $y = h_i$ line.
\end{lemma}

\begin{proof}
Let $p_1, p_2, \dots$ be the $x$-coordinates of the intersection points between $y=h_i$ line and extension disks that intersect $y = h_i$ line, such that $p_j < p_{j+1}$.
Suppose that $a \neq p_j$ for any $j$.
Then, there exists two intersection points whose $x$-coordinates are $p_j, p_{j+1}$ such that $p_j < a < p_{j+1}$ holds.
The interval $A_I$ is determined precisely by the extension disks whose intersection points are $(p_j,h_i)$ and $(p_{j+1}, h_i)$.

Observe that if $a = p_j$ holds for some $j$, this is the same case where $A$ intersects with the corresponding disk, say $B$.
Depending on the the center of $B$ being to the left or to the right of the center of $A$, the interval can still be defined by either the pair $p_{j-1}, p_j$ or $p_j,p_{j+1}$. 
\end{proof}

Now, let us show that the intervals are large enough if the disk centers have coordinates that are represented using polynomially many decimals.
We denote this interval by $I_A$ for the center of a disk $A$.
Let us refer to a number in the output as \emph{good number} if its representation has polynomially many decimals with respect to the input size, and refer to a number as \emph{bad number}, otherwise.

\begin{lemma} \label{lem:intersection}
Let $A$ be a disk centered at $(a,h_i)$, and let $B$ and $C$ be two disks that defines $I_A$.
If the centers of $B$ and $C$ are good numbers, then there exists a point $p$ in $I_A$, such that $p$ is also a good number.
\end{lemma}

\begin{proof}
Let $(b,h_j)$ and $(c,h_m)$ be the centers of $B$ and $C$, respectively.
Define $\delta_{ij} = |h_i - h_j|$, and define $\delta_{im} = |h_i - h_m|$.
The intersection points on the line $y = h_i$ are $(\sqrt{4-\delta_{ij}^2} - b, h_i)$ and $(\sqrt{4-\delta_{im}^2} - c, h_i)$.
And thus by Pythagorean theorem we have

\begin{align*}
&\left|(\sqrt{4-\delta_{ij}^2} - b) - (\sqrt{4-\delta_{im}^2} - c)\right|\\
=& \left|(b+c) + (\sqrt{4-\delta_{ij}^2} - \sqrt{4-\delta_{im}^2})\right|
\end{align*}
as the length of the interval $I_A$.

Since $b$ and $c$ are both good numbers, $b+c$ is also a good number.
So, even though the expression $(\sqrt{4-\delta_{ij}^2}- \sqrt{4-\delta_{im}^2})$ is a very small, it does not change the fact that there exists a number $p$ inside the interval $A_I$ that is represented with polynomially many decimals.
Therefore, $p$ is a good number. 
\end{proof}

\begin{lemma} \label{lem:representation}
The coordinates for the disk centers can be chosen in the form of $b + \Sigma_{i} \pm \sqrt{a_i}$, where $a_i = \sqrt{4-\delta_i^2}$ and $\delta_i$ is the distance between two of the given parallel lines, and $b$ is a good number.
\end{lemma}

\begin{proof}
Suppose that we shift all the disk centers as far left to right as possible.
This defines a partial order $D_1 \prec D_2 \prec \dots D_n$ of dependence among the disks.
Now, let us fix one disk, and then respect this partial order among other disks.

For each disk $D_{i}$ one of the following holds:
\begin{enumerate}[(i)]
\item The Euclidean distance between $D_i$ and $D_j$ is exactly $2$ for some $j < i$.
\item There are two other disks, $D_j$ and $D_m$ for some $j,m <i$ that defines the interval for the center of $D_i$ (by Lemma~\ref{lem:interval}).
\end{enumerate}

In case (i) holds, then the center of $D_i$ has the exactly same $x$-coordinate with the center of $D_j$.
In case (ii) holds, then $a_i = \sqrt{4-\delta_i^2}$, and by Lemma~\ref{lem:intersection}, the center of $D_i$ can have an $x$-coordinate which is a good number. 
\end{proof}

Now, we finally can show that $\APUD(k,0)$ is in NP by utilizing the previous lemmas.

\begin{lemma} \label{lem:decimals}
For every graph $G$ and a set of parallel lines $\mathcal{L}$, if $G$ can be realized on lines from $\mathcal{L}$ as unit disks, then there exists such a realization using polynomial number of decimals with respect to the input size. 
\end{lemma}

\begin{proof}
Since the input consists of only horizontal lines, $y$-coordinates of every disk center is fixed.
It remains to prove that $x$-coordinates of the disks in a solution cannot be forced to be in very small intervals.
That is, if a solution consists of disks whose centers are bad numbers, then we can perturb the solution to a new solution contains only good numbers.
The algorithm described below guarantees existence of an embedding of $G$ on parallel lines as unit disks, whose centers are good numbers.

\begin{enumerate}
\item Let $G_0, G_1, \dots, G_k$ denote the disjoint induced subgraphs of $G$, such that the vertices of $G_i$ correspond to the disks centered on line $y = h_i$.
\item Embed $G_0$ on $x$-axis with small perturbations which results as all disks on $y=0$ having polynomially many decimals.
\item For each $1 \leq i \leq k$; find an embedding of $G_i$ onto $y=h_i$ line by only considering neighbors from $\bigcup_{j<i}G_j$.
\end{enumerate}

Let us now show that the algorithm is correct.

Observe that each $G_i$ is a unit interval graph, and by \cite{unitIntervalGraphs}, we know that every unit interval graph has an embedding with polynomially many decimals.
Thus, we can embed $G_0$ on $x$-axis with small perturbations which results as all disks on $y=0$ having good number as their centers.

Now, consider $G_0 \cup G_1$.
If $h_1 > 2$, then $G_1$ is a standalone unit interval graph, without any relationship with $G_0$.
If, on the other hand, $h_1 \leq 2$, then by Lemma~\ref{lem:intersection} and Lemma~\ref{lem:interval}, we know that every vertex of $G_1$ can be embedded on $h_1$ as unit disks, of whose centers are good numbers.

For each $2 \leq i \leq k$, the algorithm processes every $G_i$ with respect to the previous embeddings, and thus, by Lemma~\ref{lem:interval}, $G_i$ has an embedding on $y = h_i$ as unit disks, whose centers are good numbers.
By Lemma~\ref{lem:representation}, we know that the perturbations can be done in a way such that every disk keeps its relationships with the other disks.
Note that the number of lines is bounded by the number of disks, as empty lines do not have any effect on the embedding.
Thus, even if we have the coordinates on the line $y = h_k$ as nested square roots, the resulting coordinates will have $a(1 + 2 + \dots + 2^{k})$ bits where $a$ is a good number.
Thus, the coordinates can be represented by polynomially many decimals, and hence are good numbers.

Therefore, if $G$ has an embedding on parallel lines, then we can obtain an embedding that can be represented by polynomially many decimals with respect to the input by starting from $G_0$, and gradually moving up to $G_k$, processing each subgraph iteratively. 
\end{proof}

\begin{corollary} \label{col:polytime}
Every yes-instance of $\APUD(k,0)$ has a polynomial witness that can be verified in polynomial time
\end{corollary}
\begin{proof}
By Lemma~\ref{lem:decimals}, we know that for every graph that has a feasible embedding on $k$ parallel lines, there exists an embedding that takes up polynomial space with respect to the input size.
Then, it remains to verify whether the solution satisfies the intersection graph, and the disks are centered on given lines, which is no more than $\mathcal{O}(n*k)$, where $n$ is the number of disks, and $k$ is the number of lines. 
\end{proof}

\begin{theorem} \label{thm:completeness}
$\APUD(k,0)$ is NP-complete.
\end{theorem}

\begin{proof}
Directly follows by Theorem~\ref{thm:apud0m}, Lemma~\ref{lem:decimals}, and Corollary~\ref{col:polytime}. 
\end{proof}

\section{APUD(1,1) recognition is open}
In this section, we discuss a natural basis for $\APUD(k,m)$ recognition problem.
That is, $k = 1$, and $m = 1$.
For the sake of simplicity, we can assume that given two lines are the $x$-axis and the $y$-axis.
First, we give some forbidden induced subgraphs for $\APUD(1,1)$.
Namely, those subgraphs are 5-cycle ($C_5$), a 4-sun ($S_4$), and a 5-star ($K_{1,5}$).

\begin{tikzpicture}[scale=0.2]
\node at (5,2) {$C_5$};

\tikzstyle{every node}=[draw, fill=black, shape=circle, minimum size=4pt,inner sep=0pt];
\node (A) at (0,4) {};
\node (B) at (2,2) {};
\node (C) at (1,0) {};
\node (D) at (-1,0) {};
\node (E) at (-2,2) {};
\draw (A)--(B)--(C)--(D)--(E)--(A);
\end{tikzpicture}
\hfill
\begin{tikzpicture}[scale=0.2]
\node at (5,2) {$S_4$};

\tikzstyle{every node}=[draw, fill=black, shape=circle, minimum size=4pt,inner sep=0pt];
\node (A) at (0,4) {};
\node (B) at (1,3) {};
\node (C) at (2,2) {};
\node (D) at (1,1) {};
\node (E) at (0,0) {};
\node (F) at (-1,1) {};
\node (G) at (-2,2) {};
\node (H) at (-1,3) {};
\draw (A)--(B)--(C)--(D)--(E)--(F)--(G)--(H)--(A);
\draw (B)--(D)--(F)--(H)--(B);
\draw (B)--(F);
\draw (D)--(H);
\end{tikzpicture}
\hfill
\begin{tikzpicture}[scale=0.2]
\node at (5,2) {$K_{1,5}$};
\tikzstyle{every node}=[draw, fill=black, shape=circle, minimum size=4pt,inner sep=0pt];
\node (A) at (0,2) {};
\node (B) at (0,4) {};
\node (C) at (2,2) {};
\node (D) at (1,0) {};
\node (E) at (-1,0) {};
\node (F) at (-2,2) {};
\draw (A)--(B);
\draw (A)--(C);
\draw (A)--(D);
\draw (A)--(E);
\draw (A)--(F);

\end{tikzpicture}

\begin{lemma} \label{lem:symmetry}
Consider two disks $A$ and $B$, centered on $(a,0)$ and $(b,0)$ with $a<0<b$.
If $|a| = |b|$, then another disk, $C$ that is centered on $(0,c)$ intersects either both, or none.
\end{lemma}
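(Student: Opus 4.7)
The plan is to reduce the claim to a direct distance computation, mirroring the style of the proof of Lemma~\ref{lem:triangle} but exploiting equality rather than strict inequality. First I would write down the Euclidean distance from the center $(0,c)$ of $C$ to each of the centers $(a,0)$ and $(b,0)$, obtaining $\sqrt{a^{2}+c^{2}}$ and $\sqrt{b^{2}+c^{2}}$ respectively. The hypothesis $|a|=|b|$ immediately gives $a^{2}=b^{2}$, hence these two distances coincide.

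Next I would invoke the intersection criterion for unit disks, namely that two unit disks centered at $p$ and $q$ intersect if and only if the Euclidean distance $\|p-q\|\le 2$. Since both distances from $C$'s center to $A$'s and $B$'s centers are the same real number, the inequality $\le 2$ holds for one pair exactly when it holds for the other. Therefore $C$ intersects $A$ if and only if $C$ intersects $B$, which is the assertion that $C$ intersects either both or none.

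There is essentially no obstacle here; the lemma is a clean symmetry statement and the argument is a one-line calculation. The only subtlety worth noting is that the assumption $a<0<b$ together with $|a|=|b|$ forces $b=-a$, so the configuration is genuinely symmetric about the line $x=0$ on which $C$'s center lies, which gives an alternative and perhaps more conceptual route: the reflection $(x,y)\mapsto(-x,y)$ fixes $C$'s center and swaps the centers of $A$ and $B$, hence preserves the pairwise distances $\|C-A\|$ and $\|C-B\|$. Either phrasing suffices, and I would present the algebraic version for brevity and consistency with the proof of Lemma~\ref{lem:triangle}.
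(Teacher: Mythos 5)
Your proposal is correct and follows essentially the same route as the paper's proof: both compute the two center distances $\sqrt{a^{2}+c^{2}}$ and $\sqrt{b^{2}+c^{2}}$, observe they coincide when $|a|=|b|$, and apply the distance-at-most-two intersection criterion. The reflection remark is a nice conceptual aside but adds nothing the algebraic one-liner does not already give.
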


\begin{proof}
Consider the triangle whose corners are $(a,0)$, $(b,0)$, and $(0,c)$.
If $|a|=|b|$, then $\sqrt{a^2+c^2} = \sqrt{b^2+c^2}$ holds.
For $C$ to intersect $A$, $\sqrt{a^2+c^2} \leq 2$ must hold.
However, since $|a| = |b|$, if $\sqrt{a^2+c^2} \leq 2$ holds, then $\sqrt{b^2+c^2} \leq 2$ also holds.
Thus, $C$ intersects $A$ if, and only if $C$ intersects $B$.
For the same reason, if $\sqrt{a^2+c^2} > 2$ holds, then $\sqrt{b^2+c^2} > 2$ also holds.
Thus, $C$ does not intersect $A$ if, and only if $C$ does not intersect $B$. 
\end{proof}

\begin{lemma} \label{lem:c5s4k15}
	$C_5, S_4, K_{1,5} \not\in\APUD(1,1)$.
\end{lemma}

\begin{proof}
Let $A,B,C,D,E$ be the disks that form an induced 5-cycle on two perpendicular lines,
with the intersections between the pairs $(A,B), (B,C), (C,D), (D,E), (E,A)$.
By the pigeon hole principle, three of these disks must be on the same line.
Without loss of generality, let $A$, $C$, and $D$ be on $y=0$ line with centers $(a,0)$ and $(c,0)$, and $(d,0)$, respectively.
Up to symmetry, assume that $a < 0 < c < d$.
The remaining two disks, $B$ and $E$ must be centered on $x=0$.
Without loss of generality, assume that $e < 0 < b$.
However, $E$ cannot intersect $D$ without intersecting $C$ by Lemma~\ref{lem:triangle}.
Therefore, a 5-cycle cannot be realized as unit disks on two perpendicular lines.

Let $A,B,C,D$ be the disks that form the central clique of an induced 4-sun on two perpendicular lines with.
By Lemma \ref{lem:triangle}, two of these disks must be on one line, and the remaining two must be on the other line.
Without loss of generality, assume that $A$ and $C$ are on $y=0$ line, and $B$ and $D$ are on $x=0$ line.
Denote the centers of $A,B,C,D$ with $(a,0),(0,b),(c,0),(0,d)$, respectively, and assume that $a<0<c$ and $d<0<b$.
Let $X$ and $Y$ be two disks centered on two given perpendicular lines.
Assume that $X$ intersects $A,D$, and, $Y$ intersects $B,C$.
Clearly, $X$ and $Y$ should be on the same line, and on the different sides of the clique to avoid intersections with other rays.
By Lemma~\ref{lem:symmetry}, if $|b|=|d|$, then $X$ cannot intersect $B$ or $D$ independently.
By Lemma~\ref{lem:triangle}, if $|b|<|d|$, then $X$ cannot intersect $D$ without intersecting $B$. 
Similarly, if $|b| > |d|$, then $Y$ cannot intersect $B$ without intersecting $D$.
Therefore, a 4-sun cannot be realized as unit disks on two perpendicular lines.

Four rays $a,b,c,d$ of a 4-star $u;a,b,c,d$ must be on four different sides of the central vertex $u$.
To complete a 5-star, there must be one more ray, say $e$ centered on $(e,0)$.
If we can embed $e$ on one of the lines without intersecting any rays, then we can place another disk on $(-e,0)$ to form a $K_{1,6}$.
However, a $K_{1,6}$ cannot be realized as a unit disk graph.
Therefore, a 5-star cannot be realized as unit disks on two perpendicular lines. 
\end{proof}

\begin{lemma} \label{lem:bending}
A given graph $G$ can be embedded on $x$-axis and $y$-axis as a unit disk intersection graph, without using negative coordinates for the disk centers if, and only if $G$ is a unit interval graph.
\end{lemma}

\begin{proof}
In this proof, let us denote the class of graphs that can be embedded on $x$- and $y$- axes as unit disks, using positive coordinates only by ${(xy)}^+$.
We show that the disks on the $y$-axis can be rotated by $\pi/2$ degrees counterclockwise, and the intersection relationships can be preserved as given in $G$. 

$\mathbf{G \in {(xy)}^+ \Rightarrow G \in \mathbf{UIG}:}$
Consider two disks, $A$ and $B$, whose centers are $(a,0)$ and $(0,b)$, respectively, where $a$ and $b$ are both positive numbers.
If $A$ and $B$ do not intersect, then $\sqrt{a^2 + b^2} > 2$.
After the rotation, the center of $B$ will be on $(-b,0)$.
The new distance between the centers is $a + b$.
Since $(a+b)^2 > a^2 + b^2 > 4$, the inequality $a + b > 2$ holds.

If $A$ and $B$ intersect, then $\sqrt{a^2 + b^2} \leq 2$.
After the rotation, it might still be the case that $a + b > 2$.
However, now we can safely move the center of $B$ and the other centers that have negative coordinates closer to the center of $A$, recovering the intersection.
Note that if a disk $C$ is centered in between the centers of $A$ and $B$ after the rotation, then both $A$ and $B$ must intersect $C$ by Lemma~\ref{lem:triangle}.

$\mathbf{G \in \mathbf{UIG} \Rightarrow G \in {(xy)}^+:}$
Since $G$ is a unit interval graph, we can assume that every interval is a unit disk, and the graph is embedded on $x$-axis.
Consider two disks, $A$ and $B$, whose centers are $(-a,0)$ and $(b,0)$, respectively, where $a$ and $b$ are both positive numbers.
If $A$ and $B$ are intersecting, then $a + b \leq 2$
Then, after the rotation, since $a + b \leq 2$ holds, then $\sqrt{a^2 + b^2} \leq 2$ also holds.

If $A$ and $B$ are not intersecting, then $a + b > 2$.
After the rotation $\sqrt{a^2 + b^2} \leq 2$ might hold, creating an intersection between $A$ and $B$.
However, we can simply shift the center of $A$ (along with the other centers that are on $y$-axis) far away from the center of $B$, separating $A$ and $B$. 
\end{proof}

Lemma~\ref{lem:bending} shows that if we use only non-negative coordinates, then\\
$\APUD^{+}(1,1) = \APUD^{+}(1,0) = \UIG$.
This also applies if we use only non-positive coordinates.
Thus, a given $\APUD(1,1)$ can always be partitioned into two unit interval graphs.
Considering the embedding, one of these two partitions contains the disks that are centered on the positive sides of $x$- and $y$- axes, and the other partition contains the disks that are centered on the negative sides of $x$- and $y$- axes.

\begin{lemma} \label{lem:partition}
A graph $G \in \APUD(1,1)$ can be vertex-partitioned into four parts, such that any two form a unit interval graph.
\end{lemma}

\begin{proof}
Let $\Sigma(G)$ be an embedding of $G$ onto $x$- and $y$- axes as unit disks.
Denote the set of unit disks in $\Sigma(G)$ that are centered on the positive side of the $x$-axis, and the positive side of the $y$-axis by $\Sigma^+(G)$.
Similarly, denote the set of unit disks that are centered on the negative side of the $x$-axis, and the negative side of the $y$-axis $\Sigma^-(G)$.
By Lemma~\ref{lem:bending}, both $\Sigma^+(G)$ and $\Sigma^-(G)$ yield separate interval graphs.
The vertices that correspond to the disks in $\Sigma^+(G)$ yield a unit interval graph, as well as the vertices that correspond to the disks in $\Sigma^-(G)$.
Note that in case there exists a disk which is centered at $(0,0)$, then it can be included in one of the partitions arbitrarily.
Therefore, we can vertex-partition $G$ into two unit interval graphs. 
\end{proof}

Up to this point, we showed that if a unit disk graph can be embedded onto two orthogonal lines, then it can be partitioned into two interval graphs.
However, this implication obviously does not hold the other way around.
Thus, we now identify some structural properties of $\APUD(1,1)$.

\begin{rem} \label{rem:c4}
Consider four unit disks $A,B,C,D$, that are embedded onto $x$-axis and $y$-axis.
If they induce a $4$-cycle, then the centers of those disks will be at $(a,0)$, $(0,b)$, $(-c,0)$, $(0,-d)$, respectively, where $a,b,c,d$ are non-negative numbers.
\end{rem}

For the upcoming lemma, we will utilize the observation given in Remark~\ref{rem:c4}.
The lemma is an important step to describe a characterization of $\APUD(1,1)$.

\begin{lemma} \label{lem:twoc4s}
Consider eight unit disks embedded onto $x$-axis and $y$-axis, around the origin, whose intersection graph contains two induced $4$-cycles.
Then, this intersection contains at least four 4-cycles, each with a chord, not necessarily as induced subgraphs.
Moreover, those 4-cycles are formed by pairs of disks on the same direction ($+x$, $+y$, $-x$, $-y$) with respect to the origin.
\end{lemma}

\begin{proof}
Let those $4$-cycles be $(A,B,C,D)$, and $(U,V,W,X)$, in counterclockwise order, precisely as given in Remark~\ref{rem:c4}.
Consider the disks $A,B,C$ centered at $(a,0)$, $(0,b)$, $(-c,0)$, where $a,b,c$ are non-negative numbers.
Due to the configuration, $B$ intersects both $A$ and $C$, but $A$ and $C$ do not intersect.
If $a=0$, then by Lemma~\ref{lem:triangle}, $C$ intersect $A$ if it intersects $B$.
Thus, $a>0$, and by symmetry, $b,c>0$ holds.
Since $A$ intersects $B$, and $b>0$, then $a<2$ also holds.
Again, up to symmetry, $b,c < 2$ holds.

Now consider the disks $A,B,U,V$ centered at $(a,0)$, $(0,b)$, $(u,0)$, $(0,v)$, respectively.
With the same reasoning, $0<a,b,u,v<2$ holds.
Thus, the disks that are centered on the same axis intersect.
That is, $A$ and $U$, and, $B$ and $V$ intersect.
If $A$ and $V$ do not intersect, then $\sqrt{a^2 + v^2} > 2$.
Since $U$ and $V$ intersect, $\sqrt{u^2 + v^2} \leq 2$ holds, and $u < a$ should also hold.
Thus, by Lemma~\ref{lem:triangle}, $B$ and $U$ intersects.
Symmetrically, if $B$ and $U$ do not intersect, then $A$ and $V$ intersect.
With the same reasoning, since $B$ and $C$ intersect, $C$ and $V$ also intersect.
Therefore, these eight disks have four subsets, $\{A,B,U,V\}$, $\{B,C,V,W\}$, $\{C,D,W,X\}$, $\{D,A,X,U\}$, such that in each set, the induced graph is a $4$-cycle with at least one chord, which gives us four 4-cycles, each with a chord. 
\end{proof}

The given lemmas imply that for a connected graph $G \in \APUD(1,1)$, the following hold.
\begin{enumerate}[(i)]
\item $G$ does not contain either of 4-sun ($S_4$) or 5-star($K_{1,5}$) as an induced subgraph, and the largest induced cycle in $G$ is of length $4$  (by Lemma \ref{lem:c5s4k15}).
\item $G$ can be vertex partitioned into into four parts, such that any two form a unit interval graph (by Lemma~\ref{lem:partition}).
\item Given two $4$-cycles, $(a,b,c,d)$ and $(u,v,w,x)$ in $G$, each one of the quadruplets $\{a,b,u,v\}$, $\{b,c,v,w\}$, $\{c,d,w,x\}$, and $\{d,a,x,u\}$ is either a diamond or a $K_4$ (by Lemma~\ref{lem:twoc4s}).
\end{enumerate}

Although this characterization gives a rough idea about the structure of a graph $G \in \APUD(1,1)$, it is not clear whether the recognition can be done in polynomial time.
Note that the characterization is a necessary step through recognition, but it is not yet known whether it is sufficient.
Therefore, we conjecture that given a graph $G$, it can be determined whether $G \in \APUD(1,1)$ in polynomial time, and conclude our paper.

\section{Acknowledgments}
The author wants to thank Petr Hlin\v en\'y for his insight on the hardness proof. In addition, he thanks Deniz A\u{g}ao\u{g}lu  and Micha\l{} D\k{e}bski for their extensive comments and generous helps during the preparation of this manuscript.

	%\nocite{*}	
	\bibliographystyle{plain}
\begin{tiny}
\bibliography{bibliography}	
\end{tiny}

\appendix

\end{document}